\def\titlerunning#1{\gdef\titrun{#1}}
\def\author#1{\gdef\autrun{\def\and{\unskip, }#1}\gdef\@author{#1}}
\def\address#1{{\def\and{\\\hspace*{15.6pt}}\renewcommand{\thefootnote}{}\footnote{#1}}\markboth{\autrun}{\titrun}}
\def\email#1{email: \href{mailto:#1}{#1} }
\def\subjclass#1{\par\bigskip\noindent\textbf{Mathematics Subject Classification 2020.} #1}
\def\keywords#1{\par\smallskip\noindent\textbf{Keywords.} #1}
\newenvironment{dedication}{\itshape\center}{\par\medskip}
\newenvironment{acknowledgments}{\bigskip\small\noindent\textit{Acknowledgments.}}{\par}
\newtheorem{thm}{Theorem}[section]
\newtheorem{lem}[thm]{Lemma}
\newtheorem{proposition}[thm]{Proposition}
\newtheorem{conjecture}[thm]{Conjecture}
\theoremstyle{definition}
\newtheorem*{rem}{Remark}
\newtheorem*{remarks}{Remarks}
\numberwithin{equation}{section}
\begin{document}

\titlerunning{Bloch-Torrey equation}

\title{\textbf{On the spectral properties of the Bloch-Torrey equation in infinite periodically perforated domains}}

\author{Denis S. Grebenkov \and Nicolas Moutal  \and Bernard Helffer}

\date{}

\maketitle

\address{D.S. Grebenkov and N. Moutal : Laboratoire de Physique de la Mati\`{e}re Condens\'{e}e (UMR 7643), 
CNRS -- Ecole Polytechnique, IP Paris, 91128 Palaiseau, France; \email{denis.grebenkov@polytechnique.edu; nicolas.moutal@polytechnique.edu} \and B. Helffer:  Laboratoire de Math\'ematiques Jean Leray, \
CNRS and Universit\'e de Nantes, Nantes Cedex, France; \email{Bernard.Helffer@univ-nantes.fr}}

\begin{dedication}
To Ari Laptev on the occasion  of his 70th birthday.
\end{dedication}

\begin{abstract}
We investigate spectral and asymptotic properties of the particular
Schr\"o\-dinger operator (also known as the Bloch-Torrey operator),
$-\Delta + i g x$, in infinite periodically perforated domains of
$\mathbb R^d$.  We consider Dirichlet realizations 
of this operator and formalize a numerical approach proposed in
\cite{MMGghm} for studying such operators.  In particular, we discuss the
existence of the spectrum of this operator and its asymptotic behavior
as $g\to \infty$. 
\subjclass{Primary 47B28; Secondary 47B93.}
\keywords{Bloch Torrey Equation, Floquet theory, Non-self-adjoint operators.}
\end{abstract}

\section{Introduction}

The aim of this paper is to formalize on the mathematical side a
numerical approach proposed in \cite{MMGghm} for analyzing the
Bloch-Torrey equation in infinite periodically perforated domains.
More precisely, we consider the  Dirichlet realization 
of the Bloch-Torrey operator
\begin{equation}
 B: = -\Delta + i g x\,,
\end{equation}
denoted respectively by $\mathcal B^D$ where $\Delta$ is the Laplace
operator, $x$ is one of the Cartesian coordinates, and $g$ is a real
nonzero parameter.   The Neumann and Robin realizations, which
are important in physical applications
\cite{Grebenkov07ghm,Gr1ghm,Grebenkov18ghm, MMGghm}, could also be treated by
the same techniques  but the details of the proof are omitted in the present paper.  The
major novelty of this work (as compared to former studies in
\cite{GHHghm,GHghm,AGH0ghm,AGHghm}) is that we consider the
Bloch-Torrey operator in a periodically perforated domain ($d\geq 2$),
\begin{equation}\label{defd1}
 \Omega = \mathbb R^d \setminus \{\cup_{\gamma \in \mathbb Z^d}
H_\gamma \}\,,
\end{equation}
 where 
 \begin{equation}\label{defd2}
 H_{\gamma} = \{ \mathbf{x} \in \mathbb R^d~:~
\mathbf{x} - \gamma \in H_0\} 
\end{equation}
and $H_0 \subset (-1/2,1/2)^d$ is a
domain  with a smooth boundary.\\

 A typical example is the case when $H_{\gamma} = B_2(\gamma,r)$ where  $B_2(\gamma,r)$ is the disk of radius $r <
\frac12$ centered at $\gamma$.  One of the major
difficulties in the definition and study of such non-self-adjoint
operators is that the potential $igx$ is not periodic, unbounded and
changing sign.

The Bloch-Torrey operator describes the diffusion-precession of
spin-bearing particles in nuclear magnetic resonance experiments and
helps in understanding the intricate relation between the geometric
structure of a studied sample (domain) and the measured signal
\cite{Grebenkov07ghm,Gr1ghm,Grebenkov18ghm}.  The spectral properties
of this operator and its asymptotic behavior play thus a crucial role
in this analysis.  \\

The paper is organized as follows.  In Sec. \ref{sec:BT_def}, we
provide a rigorous definition of the considered realization of the
Bloch-Torrey operator and describe its basic properties.  In
Sec. \ref{sec:Floquet}, we use the periodicity in $y$-direction via
the Floquet theory to reduce the operator on the planar perforated
domain to a family (indexed by a Floquet parameter) of operators on
the infinite perforated cylinder.  In particular, we formulate here
two conjectures about their spectral properties.  Section
\ref{sec:x-dir} analyzes the role of the pseudo-periodicity in
$x$-direction, which is specific to the Bloch-Torrey operator.  In
Sec. \ref{sec:quasimodes}, we formulate in an asymptotic regime the
main results concerning the non-emptiness of the spectrum and the
asymptotic properties.  Finally, Sec. \ref{sec:conclusion} concludes
the paper and discusses the extensions in the proofs to more general
settings.

\section{The Bloch-Torrey operator in the perforated whole plane}
\label{sec:BT_def}

We start from the Dirichlet realization of the Bloch-Torrey operator
\begin{equation}
B_g: =-\Delta_{x,y} + i g x\,,
\end{equation}
 in $\Omega = \mathbb R^2 \setminus \cup_{\gamma \in \mathbb Z^2} H_\gamma$ as defined in \eqref{defd1}-\eqref{defd2} and where  $g$ is a  non zero real parameter. \\
We introduce
\begin{equation*}
\mathcal H:= L^2(\Omega) \quad \mbox{and} \quad \mathcal V:=\{u \in H_0^1(\Omega)\,,\, |x|^\frac 12 u\in L^2(\Omega)\}
\end{equation*}
and we can now extend the operator initially defined on
$C_0^\infty(\Omega)$ to get a closed operator on $\mathcal H$ by the
following variant of the Friedrichs extension.
\begin{proposition}\label{proposition2.1}
The Dirichlet  realization $\mathcal B^D_g $ of  $B_g$ with domain
\begin{equation}
D(\mathcal B^D_g )=\{ u\in  \mathcal V\,,\, \mathcal B^D_g u \in \mathcal H\}
\end{equation}
is a closed accretive operator which generates a continuous semi-group
on $L^2(\Omega)$.
\end{proposition}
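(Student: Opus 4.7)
The natural approach is to construct $\mathcal B_g^D$ as the operator associated with the sesquilinear form
\[
a(u,v) := \int_\Omega \nabla u \cdot \overline{\nabla v}\, dx + ig \int_\Omega x\, u\, \bar v\, dx
\]
on $\mathcal V \times \mathcal V$. First I would verify the structural facts: $\mathcal V$, equipped with the norm $\|u\|_{\mathcal V}^2 := \|u\|_{H^1}^2 + \||x|^{1/2} u\|_{L^2}^2$, is a Hilbert space; $C_0^\infty(\Omega)$ is dense in $\mathcal V$ (by successive cutoffs in $x$ and $y$ followed by mollification, using smoothness of $\partial H_0$), so that $\mathcal V \hookrightarrow \mathcal H$ densely; and $a$ is continuous on $\mathcal V \times \mathcal V$ by Cauchy-Schwarz applied to $|\int x\, u\,\bar v| \leq \||x|^{1/2} u\|\,\||x|^{1/2} v\|$. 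Accretivity is then immediate:
\[
\mathrm{Re}\, a(u,u) = \|\nabla u\|_{L^2}^2 \geq 0 .
\]

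The decisive step is m-accretivity, i.e.\ surjectivity of $\mathcal B_g^D + \lambda$ onto $\mathcal H$ for some (hence every) $\lambda>0$. The obstacle is that $a$ is \emph{not} coercive on $\mathcal V$: the real part controls the gradient but gives no direct control on $\||x|^{1/2}u\|$, so the Lax-Milgram theorem does not apply on $\mathcal V$. To circumvent this, I would truncate the potential: take $\chi \in C_c^\infty(\mathbb R)$ with $\chi \equiv 1$ near $0$, set $V_R(x) := g\,x\,\chi(x/R)$, and consider
\[
a_R(u,v) := \int_\Omega \nabla u \cdot \overline{\nabla v}\, dx + i \int_\Omega V_R\, u\, \bar v\, dx
\]
on $H_0^1(\Omega)\times H_0^1(\Omega)$. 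For every $\lambda>0$, $a_R+\lambda\langle\cdot,\cdot\rangle$ is bounded and coercive on $H_0^1(\Omega)$, so standard Lax-Milgram produces, for each $f\in\mathcal H$, a unique solution $u_R\in H_0^1(\Omega)$ of $a_R(u_R,v)+\lambda\langle u_R,v\rangle = \langle f,v\rangle$ for all $v\in H_0^1(\Omega)$.

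The hard part is to obtain a priori bounds on $\|u_R\|_{\mathcal V}$ uniform in $R$. Testing against $\bar u_R$ gives the $H^1$-bound $\|u_R\|_{H^1}\leq C\|f\|$ immediately; the genuine difficulty lies in the weighted estimate $\||x|^{1/2}u_R\|\leq C\|f\|$, which I expect to extract by testing the truncated equation against a carefully chosen weighted perturbation (such as $\varphi_N(x)\,x\,\bar u_R$ with an auxiliary cutoff $\varphi_N$), combining the real and imaginary parts of the resulting identity with the previously obtained $H^1$-bound, and letting $N\to\infty$. Once such uniform bounds are in hand, a weak-limit argument delivers $u\in\mathcal V$ with $a(u,v)+\lambda\langle u,v\rangle=\langle f,v\rangle$ for all $v\in\mathcal V$, which is a bounded inverse of $\mathcal B_g^D+\lambda$. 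The characterization $D(\mathcal B_g^D)=\{u\in\mathcal V:B_g u\in\mathcal H\}$ drops out of the variational identity, and the Lumer-Phillips theorem then gives closedness together with generation of a strongly continuous contraction semigroup on $L^2(\Omega)$. The uniform weighted a priori estimate is the only nonroutine point; everything else reduces to standard accretive form theory.
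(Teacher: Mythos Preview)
Your approach is correct and will work, but it differs from the paper's. The paper does not truncate the potential or pass to a weak limit; instead it invokes a \emph{generalized Lax--Milgram theorem} (due to Almog--Helffer, recalled in an appendix) that replaces the coercivity hypothesis by the weaker requirement that
\[
|a(u,u)| + |a(u,\Phi_1 u)| \geq \alpha \|u\|_{\mathcal V}^2
\]
for some bounded operator $\Phi_1$ on $\mathcal V$ extending boundedly to $\mathcal H$. The paper takes $\Phi_1$ to be multiplication by $x/\sqrt{1+x^2}$ and checks in one line that $\Im a(u,\Phi_1 u) \geq \int x^2(1+x^2)^{-1/2}|u|^2 - C\|u\|_{H^1}^2$, which together with $\Re a(u,u)$ recovers the full $\mathcal V$-norm. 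Hille--Yosida then gives the semigroup. This is exactly the same weighted-test-function idea you sketch for your uniform a~priori bound (your $\varphi_N(x)\,x\,\bar u_R$ is playing the role of $\Phi_1 u$), but the paper packages it as the verification of an abstract hypothesis rather than as an estimate inside a limiting argument. The payoff of the paper's route is brevity and the avoidance of any approximation procedure; the payoff of yours is that it stays within standard Lax--Milgram and Lumer--Phillips, at the cost of managing the truncation limit. One small caution on your side: $u_R$ lives only in $H_0^1(\Omega)$, not in $\mathcal V$, so the bound you can actually prove is $\int_{|x|\leq N} |x|\,|u_R|^2 \leq C$ uniformly in $R$ and $N$, after which Fatou/weak lower semicontinuity places the weak limit $u$ in $\mathcal V$.
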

\begin{proof}
We take 
\begin{equation*}
\mathcal H:= L^2(\Omega) \mbox{ and } \mathcal V:=\{u \in H_0^1(\Omega)\,,\, |x|^\frac 12 u\in L^2(\Omega)\}
\end{equation*}
and apply Theorem \ref{LaxMilgramv2} to the quadratic form
\begin{equation*}
a(u,u)= \int_\Omega |\nabla u|^2\, dx \, dy + i \int_\Omega x |u|^2 \, dxdy  + C_0 \, ||u||^2\,,
\end{equation*}
where $||u||$ denotes the $L_2(\Omega)$ norm and $C_0>0$ is a large
positive constant to be determined.\\
For $\Phi_1=\Phi_2$ we take the multiplication operator by $x/\sqrt{1
+ x^2}$.\\
We simply then observe that
\begin{equation*}
\Re a (u,u) = \int_\Omega |\nabla u|^2\, dx \, dy + C_0 \, ||u||^2\,,
\end{equation*}
and then consider $ \Im a (u, \Phi_1(u))$.\\
By an easy computation, we can show the existence of $C>0$ such that,
for all $u\in \mathcal V$,
\begin{equation*}
\Im a (u, \Phi_1(u))\geq  \int_\Omega x^2 (1+x^2)^{-1/2}  |u|^2 \, dxdy - C\, ||u||^2_{H^1(\Omega)}
\end{equation*}
 Hence the
assumptions of the theorems are satisfied if we choose $C_0$
sufficiently large. What we call $\mathcal B^D$ is the operator $S -
C_0$ where $S$ is given by Theorem \ref{LaxMilgramv2}.
To get the semi-group property, we then apply the Hille-Yosida Theorem.
\end{proof}

\section{Floquet approach for $y$-periodic problems}
\label{sec:Floquet}

The goal is now to analyze the spectrum of this operator and  to
possibly consider asymptotic problems in function of $g$.
Starting with a qualitative analysis, we omit the reference to 
$g>0$.

\subsection{Floquet decomposition}

 Let $\tau_2$ be the translation by $(0,1)$, i.e.
\begin{equation*}
\tau_2 u(x,y) = u (x,y-1) \mbox{ for } u\in L^2(\Omega)\,.
\end{equation*}
Observing that
\begin{equation}
\tau_2 \circ \mathcal B^D =  \mathcal B^D \circ \tau_2\,,
\end{equation}
we can, at least formally, apply a Floquet theorem in the $y$ variable
and get the family of operators ($q \in \mathbb R$)
\begin{equation}
 B_{g,q}  : =- \left(\frac d {dy} -i q \right)^2 - \frac{d^2}{dx^2}+ i g x\,,
\end{equation}
in 
\begin{equation*} 
\Omega_0 = \biggl(\mathbb R \times (-1/2,1/2)\biggr) \setminus  \bigl\{\cup_{n \in \mathbb Z} H_{(n,0)}  \bigr\}\,,
\end{equation*}
where we put the Dirichlet condition at the boundary of each $ H_{(n,0)}$ and
the periodicity condition on the remaining boundary
\begin{equation}
u(x,- 1/2)=u(x, 1/2)\,,\quad
\partial_y u (x,- 1/2)= \partial_y u(x, 1/2)\,,\quad \forall x\in \mathbb R\,.
\end{equation}
Nevertheless the best way is to consider it as an operator on the
infinite perforated cylinder
\begin{equation*}
\widehat \Omega_0 := \bigl(\mathbb R \times \mathbb T^1\bigr) \setminus  \bigl\{\cup_{n \in \mathbb Z}  H_{(n,0)} \bigr\} \,,
\end{equation*}
where $\mathbb T^1=\mathbb R/\mathbb Z$ with Dirichlet condition on
$\partial \widehat \Omega_0$.

\subsection{Spectral analysis}

We now study the spectral properties of the Dirichlet realization
$\mathcal B^D_q $ of $B_{g,q}$. We omit from now on the reference to
$g>0$ which is unimportant for the qualitative analysis.
\begin{proposition}\label{Q1.2} 
For any $q\in \mathbb R$, we can extend $B_q$ as a closed operator
$\mathcal B^D_q $ with the following properties:
\begin{enumerate}
\item  $\mathcal B^D_q $ is a closed accretive
operator on $L^2(\Omega_0)$ and is the generator of a continuous
semi-group.
\item  
$\mathcal B^D_q  $ has compact resolvent.
\end{enumerate}
\end{proposition}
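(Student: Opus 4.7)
The plan is to treat the two assertions in turn, closely following the template set by Proposition~\ref{proposition2.1} for part~(1) and adding a dedicated compactness argument for part~(2).

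For part~(1), I would introduce the Floquet form space
\[
\mathcal{V}_q := \{u \in H^1_0(\widehat\Omega_0) \,:\, |x|^{1/2} u \in L^2(\widehat\Omega_0)\}
\]
and work with the sesquilinear form
\[
a_q(u,v) := \int_{\widehat\Omega_0}\!\!\bigl(\partial_x u\,\overline{\partial_x v} + (\partial_y - iq)u\,\overline{(\partial_y - iq)v}\bigr)\,dx\,dy \;+\; i\int_{\widehat\Omega_0} x\,u\bar v\,dx\,dy \;+\; C_0\langle u,v\rangle.
\]
Because $(\partial_y - iq)$ differs from $\partial_y$ only by a bounded multiplier, $\mathrm{Re}\,a_q(u,u)$ still controls the full $H^1$-norm for $C_0$ large enough, with a constant that is locally uniform in $q$. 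I would then apply Theorem~\ref{LaxMilgramv2} with the same multiplier $\Phi_1(u) = x(1+x^2)^{-1/2} u$ used in the proof of Proposition~\ref{proposition2.1}; an identical computation gives
\[
\mathrm{Im}\,a_q(u,\Phi_1(u)) \geq \int_{\widehat\Omega_0} x^2(1+x^2)^{-1/2}|u|^2\,dx\,dy - C \|u\|_{H^1}^2,
\]
the $q$-dependent cross terms being absorbed into the $H^1$-seminorm. This produces a closed m-accretive operator whose shift by $-C_0$ is $\mathcal{B}^D_q$, and Hille--Yosida delivers the semigroup.

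For part~(2), the key structural observation is that $\widehat\Omega_0$ is a cylinder $\mathbb{R}\times \mathbb{T}^1$ (minus perforations with smooth boundary), so the only non-compact direction is $x$, and confinement in $x$ is already encoded in $\mathcal{V}_q$ through the $|x|^{1/2}$-weight. The strategy is to establish the compact embedding
\[
\mathcal{V}_q \hookrightarrow L^2(\widehat\Omega_0)
\]
by a diagonal extraction. Given a bounded sequence $(u_n)\subset \mathcal{V}_q$, on each truncated slab $\widehat\Omega_0\cap\{|x|\leq N\}$ Rellich--Kondrachov (applicable since $H_0$ has smooth boundary) produces an $L^2$-convergent subsequence, while the uniform tail estimate
\[
\int_{|x|\geq N} |u_n|^2\,dx\,dy \;\leq\; \frac{1}{N}\int_{\widehat\Omega_0} |x|\,|u_n|^2\,dx\,dy \;\leq\; \frac{1}{N}\,\sup_m \bigl\||x|^{1/2} u_m\bigr\|^2
\]
kills the mass at infinity uniformly. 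The standard factorization $(\mathcal{B}^D_q + C_0)^{-1}: L^2 \to \mathcal{V}_q \hookrightarrow L^2$ then gives a compact resolvent at one regular value, hence at all of them.

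The main obstacle I expect is purely the compactness step: coercivity of $a_q$ gives an $H^1$-bound only locally (Friedrichs--Poincaré on slabs fails globally), so one must genuinely combine local Rellich with the weighted tail bound rather than invoke a single off-the-shelf embedding. Everything else is, as in Proposition~\ref{proposition2.1}, soft functional analysis with $q$ entering only as a bounded perturbation of $\partial_y$.
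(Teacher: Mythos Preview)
Your proposal is correct and follows essentially the same route as the paper: the paper also reduces to the $\widehat\Omega_0$ setting with the form space $\{u\in H_0^1(\widehat\Omega_0):|x|^{1/2}u\in L^2\}$, invokes Theorem~\ref{LaxMilgramv2} with the same multiplier $\Phi_1=x(1+x^2)^{-1/2}$ exactly as in Proposition~\ref{proposition2.1}, and then appeals to Hille--Yosida. For part~(2) the paper simply states that compactness of the resolvent follows from the compact injection $\mathcal V_0\hookrightarrow\mathcal H_0$ without spelling out the Rellich-plus-tail argument; your diagonal extraction is the standard way to justify that line, so you are supplying detail the paper omits rather than diverging from it.
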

\begin{proof}
The proof is the same as for Proposition \ref{proposition2.1} but this
time we take
\begin{equation*}
\mathcal H_0:= L^2(\Omega_0) \mbox{ and } \mathcal V_0 
:=\{u \in H^1(\Omega_0)\,,\, |x|^\frac 12 u\in L^2(\Omega_0) \mbox{ and  (BC) holds} \}\,,
\end{equation*}
where by $(BC)$ we mean that $u$ satisfies the Dirichlet condition at
the boundary  of the $H_{(0,n)}$ and that we have the
periodicity condition $u(x,-\frac 12)=u(x,\frac 12)$ for $x\in \mathbb
R$.\\ Actually, it is better to deal with $\widehat \Omega_0$ as
mentioned above and to consider
\begin{equation*}
\widehat {\mathcal H}_0:= L^2(\widehat \Omega_0) \mbox{ and }\widehat { \mathcal V}_0 
:=\{u \in H_0^1(\widehat \Omega_0)\,,\, |x|^\frac 12 u\in L^2(\widehat \Omega_0)  \}\,.
\end{equation*}

The compact resolvent property is a consequence of the compact
injection of $\mathcal V_0$ in $\mathcal H_0$.\\

Noting the inequality
\begin{equation*}
\Re \langle \mathcal B^D_q  u\,,\, u\rangle \geq 0\,,\quad \forall u \in \mathcal V_0 \,,
\end{equation*}
we can construct by the Hille-Yosida theorem the associated continuous
semi-group and consider (as in \cite{MMGghm})
\begin{equation*}
\mathbb R_+ \ni t \mapsto \exp (- t \mathcal B^D_q) \,.
\end{equation*}
\end{proof}

\subsection{Application of Floquet decomposition to spectral theory}\label{ss3.3}

The interest of the Floquet theory is to relate the spectrum of the
initial operator to the union of the spectra of a family of simpler
operators. It is based on the so-called Floquet decomposition whose
definition is independent of the operator but is only related to the
symmetry group property satisfied by the operator.  The application to
spectral theory is standard in the self-adjoint case (see for example
in Reed-Simon \cite{ReSighm}) but we are not aware of a reference for
the application in the non-self-adjoint case (see nevertheless
\cite{Ku0ghm,Kughm}). Therefore it is not completely clear that the
proof easily goes on in the non-self-adjoint case and this is why we
present in this subsection an approach covering this situation.  Our
main result is:
\begin{proposition}\label{Prop3.2}
\begin{equation}
\overline{ \cup_{q  \in \mathbb R} \sigma(\mathcal B^D_q )} \subset  \sigma ( \mathcal B^D) \,,
\end{equation}
 where, for $A\subset \mathbb C$, $\overline{A}$ denotes the closure of $A$.
\end{proposition}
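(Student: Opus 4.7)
The plan is to prove the inclusion directly by producing, for every $\lambda$ in the union on the left, a Weyl sequence for $\mathcal{B}^D - \lambda$, which forces $\lambda$ into the approximate point spectrum of $\mathcal{B}^D$ and hence into $\sigma(\mathcal{B}^D)$. Because $\sigma(\mathcal{B}^D)$ is closed (as $\mathcal{B}^D$ is a closed operator), the inclusion then passes automatically to the closure on the left-hand side. I deliberately avoid inverting a full Floquet direct-integral decomposition of $\mathcal{B}^D$: the virtue of the approximate-eigenfunction route is that it goes through verbatim in the non-self-adjoint setting, since for any closed operator $A$ a unit-norm sequence $w_n$ with $(A-\lambda)w_n \to 0$ rules out bounded invertibility of $A-\lambda$.

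Concretely, since $\mathcal{B}^D_q$ has compact resolvent by Proposition~\ref{Q1.2}, any $\lambda \in \sigma(\mathcal{B}^D_q)$ is an eigenvalue with a normalized eigenfunction $\phi_q \in D(\mathcal{B}^D_q)\subset \widehat{\mathcal{V}}_0$. I lift it to $\Omega$ by
\begin{equation*}
\psi_q(x,y) := e^{-iqy}\,\widetilde{\phi}_q(x,y),
\end{equation*}
where $\widetilde{\phi}_q$ denotes the $\mathbb{Z}$-periodic extension of $\phi_q$ in the $y$-variable. The conjugation identity $\partial_y(e^{-iqy}\,\cdot\,) = e^{-iqy}(\partial_y - iq)(\,\cdot\,)$ yields $B\psi_q = e^{-iqy}B_{g,q}\phi_q = \lambda\psi_q$ distributionally on $\Omega$, with the correct Dirichlet data on $\partial\Omega$. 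I then fix a plateau cut-off $\chi_n(y) := \chi(y/n)$ with $\chi \in C_c^2(\mathbb{R})$ equal to $1$ on $[-1,1]$ and supported in $[-2,2]$, and set $u_n := \chi_n \psi_q$. A Leibniz calculation gives
\begin{equation*}
(\mathcal{B}^D - \lambda)u_n \;=\; -\chi_n''\,\psi_q - 2\chi_n'\,\partial_y\psi_q,
\end{equation*}
since $\chi_n$ depends only on $y$ and commutes with multiplication by $x$, while $(B - \lambda)\psi_q = 0$. The $y$-periodicity of $|\psi_q|^2$ and of $|\partial_y\psi_q|^2$, combined with $\|\chi_n^{(k)}\|_\infty = O(n^{-k})$ on a support of Lebesgue measure $O(n)$, gives $\|(\mathcal{B}^D - \lambda)u_n\|_{L^2(\Omega)} = O(n^{-1/2})$, while integrating $|\psi_q|^2$ over the plateau $\{|y|\le n\}$ yields $\|u_n\|_{L^2(\Omega)}^2 \ge c\,n$ for some $c>0$. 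Hence $w_n := u_n/\|u_n\|$ is the desired Weyl sequence, so $\lambda \in \sigma(\mathcal{B}^D)$.

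The step I expect to be the main obstacle is the verification that $u_n$ genuinely lies in the operator domain $D(\mathcal{B}^D)$ of Proposition~\ref{proposition2.1}, not only in the form domain $\mathcal{V}$. One must extract from $\mathcal{B}^D_q \phi_q = \lambda \phi_q$ and interior elliptic regularity that $\phi_q$ is $H^2_{\mathrm{loc}}$ on $\widehat{\Omega}_0$ and that $x\phi_q \in L^2(\widehat{\Omega}_0)$ (by reading $gx\phi_q$ off the eigenvalue equation); once this is in hand, the compact $y$-support of $u_n$ together with the $y$-periodicity of $\phi_q$ automatically yields $|x|^{1/2}u_n \in L^2(\Omega)$ and $\mathcal{B}^D u_n \in L^2(\Omega)$. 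Everything else—the algebraic form of the Bloch lift, the commutator computation, and the cut-off estimates—is routine once the framework of Propositions~\ref{proposition2.1} and~\ref{Q1.2} is in place.
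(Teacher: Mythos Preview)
Your argument is correct and is essentially the same as the paper's proof: both extend a normalized eigenfunction of $\mathcal{B}^D_q$ periodically in $y$, multiply by the Bloch phase, truncate with a rescaled cutoff $\chi(y/n)$, and use the commutator estimate $\|(\mathcal{B}^D-\lambda)u_n\|=O(n^{-1/2})$ against the lower bound $\|u_n\|^2\gtrsim n$ to conclude. Your treatment is slightly more explicit than the paper's on two points: you get the sign of the Bloch phase $e^{-iqy}$ consistent with the definition $B_{g,q}=-(\partial_y-iq)^2-\partial_x^2+igx$ (the paper writes $e^{iyq}$, which appears to be a harmless sign slip), and you flag the verification that $u_n\in D(\mathcal{B}^D)$, which the paper leaves implicit but which indeed follows from the description $D(\mathcal{B}^D)=\{u\in\mathcal{V}:\mathcal{B}^D u\in\mathcal{H}\}$ once $|x|^{1/2}u_n\in L^2$ and $\mathcal{B}^D u_n\in L^2$ are checked.
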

\begin{proof}~
\paragraph{Step 1} 
We first observe that $\mathcal B^D_q $ is  unitarily  equivalent to
$\mathcal B^D_{q + 2\pi}$. Hence it is enough to consider the problem
for $q \in [0,2\pi]$.

\paragraph{Step 2} 
This is a simple version of the Schnoll theorem relating the spectrum
and the existence of generalized polynomially bounded
eigenfunctions. Let $q \in \mathbb R$. We know from Proposition
\ref{Q1.2} that $\sigma(\mathcal B_q ^D)$ is either empty or consists
of eigenvalues with finite multiplicity. There is nothing to prove if
$\sigma(\mathcal B_q ^D)$ is empty. Let us assume now that it is not
empty and let $\mu$ be an eigenvalue of $\mathcal B^D_q $ associated
to a normalized eigenfunction $u_\mu (x,y)$. We extend $u_\mu$ as a
periodic function in the whole domain $\Omega$ and consider a cutoff
function $\chi$ which equals $1$ on $(-\frac 12,\frac 12)$ with
support in $(-1,+1)$.\\
We then consider 
\begin{equation*}
u_{\mu,n} (x,y) := e^{iy q } \chi (y/n) u_\mu (x,y)\,.
\end{equation*}
We have, for $n\in \mathbb N^*$, 
\begin{equation}\label{lb1}
||u_{\mu,n}||_{L^2(\Omega)}^2 \geq \int_{\Omega \cap\{ |y| <\frac n 2\}} |u_{\mu}|^2 dx dy   
=   n  \int_{\Omega \cap\{ |y| <\frac 1 2\}}  |u_{\mu}|^2 dx dy = n\,,
\end{equation}
and 
\begin{equation*}
f_n:=  (\mathcal B^D - \mu) u_{\mu,n} = - \frac 2n \chi'(y/n) \partial_y u_\mu (x,y)
  - \frac {1}{n^2} \chi''(y/n) u_\mu (x,y)\,.
\end{equation*}
It is immediate to see that there exists a constant $C$ such that
\begin{equation*}
|| f_n||_{L^2(\Omega)} \leq C/\sqrt{n}\,.
\end{equation*}
If $\mu$ was not in the spectrum of $\mathcal B_D$, we would have
\begin{equation*}
u_{\mu,n} = (\mathcal B_D-\mu)^{-1}  f_n = \mathcal O (1/\sqrt{n})\,,
\end{equation*}
in contradiction with \eqref{lb1}.
\end{proof}
  
We know that in the self-adjoint case (i.e. the case of the Schr\"odinger operator
with real periodic potential) the converse inclusion holds true. We notice in our
situation two conjectures. 
\begin{conjecture}\label{conjectureA}
\begin{equation}
\overline{ \cup_{q  \in \mathbb R} \sigma(\mathcal B^D_q )}= \sigma ( \mathcal B^D) \,.
\end{equation}
\end{conjecture}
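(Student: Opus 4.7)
The plan is to establish the reverse inclusion by constructing the resolvent of $\mathcal B^D$ from those of the fibered operators $\mathcal B^D_q$ via a direct-integral decomposition. First, I would introduce the Floquet--Gelfand transform $\mathcal U$ in the periodic variable $y$, defined initially on $C_0^\infty(\Omega)$ by
\begin{equation*}
(\mathcal U u)(x,y,q) := \sum_{k \in \mathbb Z} e^{-ikq}\, u(x, y-k), \qquad q \in [0, 2\pi],
\end{equation*}
and extended to a unitary map from $L^2(\Omega)$ onto the direct integral $\int^{\oplus}_{[0,2\pi]} L^2(\widehat{\Omega}_0)\, \tfrac{dq}{2\pi}$. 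On the core one checks that
\begin{equation*}
\mathcal U\, \mathcal B^D\, \mathcal U^{-1} = \int^{\oplus}_{[0,2\pi]} \mathcal B^D_q\, \tfrac{dq}{2\pi}\,,
\end{equation*}
the subtle point being the identification of the domain on the right-hand side with the image of $D(\mathcal B^D)$, i.e.\ fiber functions must lie in $D(\mathcal B^D_q)$ for a.e.\ $q$ with the natural $L^2$-integrability in $q$.

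Once this identification is available, suppose $\lambda \notin \overline{\cup_{q\in\mathbb R} \sigma(\mathcal B^D_q)}$. By the $2\pi$-periodicity established in Step~1 of the proof of Proposition~\ref{Prop3.2}, it suffices to restrict $q$ to $[0,2\pi]$. The goal reduces to the uniform resolvent bound
\begin{equation*}
M(\lambda) := \sup_{q \in [0,2\pi]} \|(\mathcal B^D_q - \lambda)^{-1}\| < \infty,
\end{equation*}
since this produces, via the direct integral, a bounded inverse of $\mathcal B^D - \lambda$ of norm at most $M(\lambda)$, showing $\lambda \notin \sigma(\mathcal B^D)$.

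To prove $M(\lambda)<\infty$, I would argue that $B_{g,q} = -(\partial_y - iq)^2 - \partial_x^2 + igx$ is a holomorphic family of type (A) in Kato's sense: the form domain $\widehat{\mathcal V}_0$ does not depend on $q$, and the coefficients depend polynomially on $q$. Combined with the compact-resolvent property from Proposition~\ref{Q1.2}, this yields that $q \mapsto (\mathcal B^D_q - \lambda)^{-1}$ is continuous in operator norm on the open set where it is defined. A continuous function on the compact interval $[0,2\pi]$ is bounded, which is exactly the estimate needed. (The closure on the left-hand side of the conjectured equality is what allows us to assume strict avoidance of every fiber spectrum, rather than only a distance condition to some accumulation point.)

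The main obstacle, which distinguishes this situation from the self-adjoint case, is that resolvent norms of non-self-adjoint operators may be arbitrarily large at fixed distance from the spectrum (pseudospectral inflation), so one cannot pass from pointwise to uniform resolvent bounds without invoking continuity in $q$. Compactness of $[0, 2\pi]$ rescues us here, provided the continuity of the holomorphic family is rigorously established near the relevant portion of the parameter space. A second technical point is to justify the direct-integral identification of operator domains outside the self-adjoint framework of Reed--Simon \cite{ReSighm}; one would follow \cite{Kughm} or work directly on the dense subspace $C_0^\infty(\Omega)$ and extend by density in the graph norm of $\mathcal B^D$.
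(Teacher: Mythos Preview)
The statement you are attempting to prove is labeled a \emph{Conjecture} in the paper; the authors do not prove it. What they establish instead is the weaker result $\omega_D=\omega_R$ (equality of the connected components of $\rho(\mathcal B^D)$ and of $\Omega^R$ containing the left half-plane). So there is no ``paper's own proof'' to compare with, only a partial result, and you should be aware that your proposal is attacking something the authors explicitly left open.

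Your strategy is exactly the natural one, and the paper in fact carries out the central analytic step you describe: starting from the resolvent identity
\[
\mathcal U^{-1}(\mathcal B^D-\lambda)^{-1}\mathcal U=\int_0^{2\pi}(\mathcal B^D_q-\lambda)^{-1}\,dq,
\]
valid for $\lambda\in\rho(\mathcal B^D)$, they prove the uniform bound $\sup_{q\in[0,2\pi]}\|(\mathcal B_q^D-\lambda)^{-1}\|<\infty$ for every $\lambda\in\Omega^R$ by the same local perturbation/continuity-plus-compactness argument you sketch. With this in hand they obtain a bounded holomorphic operator-valued function $\tilde R(\lambda)$ on all of $\Omega^R$, coinciding with the resolvent on $\rho(\mathcal B^D)$. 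But they then use analytic continuation combined with a Weyl-sequence contradiction argument, which only propagates the equality $\tilde R(\lambda)=(\mathcal B^D-\lambda)^{-1}$ through the \emph{connected component} $\omega_R$. They cannot reach other components of $\Omega^R$ this way, and they explicitly worry (Remark~(5)) that such components might exist.

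Your proposal would bypass connectedness entirely, but the price is the step you flag as a ``second technical point'' and then dismiss too quickly: the identification
\[
\mathcal U\,\mathcal B^D\,\mathcal U^{-1}=\int^\oplus_{[0,2\pi]}\mathcal B^D_q\,\tfrac{dq}{2\pi}
\]
\emph{as closed operators, with matching domains}. This is precisely the obstacle the authors signal when they write that the spectral application of Floquet theory ``is standard in the self-adjoint case \ldots\ but we are not aware of a reference for the application in the non-self-adjoint case.'' Your two proposed fixes are not sufficient as written: the Kuchment references \cite{Ku0ghm,Kughm} were consulted by the authors and apparently did not yield the needed statement, and the density argument through $C_0^\infty(\Omega)$ requires both that $C_0^\infty(\Omega)$ be a core for the Lax--Milgram operator $\mathcal B^D$ (not established in the paper, and not automatic with the unbounded non-real potential $igx$) and that its Floquet image be a core for the direct integral on the right, which is a separate and nontrivial claim. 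Without this identification you cannot conclude that the bounded operator $\tilde R(\lambda)$ maps into $D(\mathcal B^D)$ and inverts $\mathcal B^D-\lambda$ there; you only know it inverts each fiber. If you can actually close this domain gap rigorously, you will have proved the conjecture and strictly improved on the paper; but as the proposal stands, the gap is real and is exactly the one that kept the authors from claiming a proof.
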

We will see below that this conjecture is a consequence of the second
one:
\begin{conjecture}\label{conjectureB}
$\mathcal B^D$ has the following property:
\begin{equation}\label{conjdens}
\overline{\rho (\mathcal B^D)} = \mathbb C\,,
\end{equation}
where $\rho (\mathcal B^D)$ denotes the resolvent set of $\mathcal B^D$.
\end{conjecture}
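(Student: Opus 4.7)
The plan is to prove Conjecture \ref{conjectureB} by reducing it to an analytic-perturbation statement for the Floquet fibers, supplemented by a uniform resolvent control that lets the fiberwise resolvents be assembled into a resolvent of $\mathcal B^D$. Concretely, I would prove the reverse inclusion of Proposition \ref{Prop3.2} (that is, Conjecture \ref{conjectureA}), which together with an analytic structure of the Floquet spectrum yields empty interior of $\sigma(\mathcal B^D)$.

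First I would exploit the analyticity of the family $q\mapsto \mathcal B^D_q$. Since
\begin{equation*}
\mathcal B^D_q = -\partial_x^2 - \partial_y^2 + 2iq\,\partial_y + q^2 + igx,
\end{equation*}
the family $\{\mathcal B^D_q\}_{q\in\mathbb C}$ is a holomorphic family of type (A) in the sense of Kato, with common form domain $\widehat{\mathcal V}_0$ independent of $q$. Since each $\mathcal B^D_q$ has compact resolvent by Proposition \ref{Q1.2}, its spectrum consists of isolated eigenvalues of finite multiplicity, and Kato's analytic perturbation theorem gives that the eigenvalue branches $q\mapsto \mu_n(q)$ are piecewise analytic on $[0,2\pi]$, with at most algebraic branch points. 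Each branch traces a compact analytic arc $\Gamma_n\subset \mathbb C$, so
\begin{equation*}
\bigcup_{q\in[0,2\pi]} \sigma(\mathcal B^D_q) \;=\; \bigcup_{n\in\mathbb N}\Gamma_n
\end{equation*}
is a countable union of nowhere dense compact sets, and by the Baire category theorem $\overline{\bigcup_q \sigma(\mathcal B^D_q)}$ has empty interior in $\mathbb C$.

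Next, for any $\lambda_0 \notin \overline{\bigcup_q \sigma(\mathcal B^D_q)}$, take an open neighborhood $U\ni \lambda_0$ disjoint from $\overline{\bigcup_q \sigma(\mathcal B^D_q)}$. On $U$, each $\lambda \mapsto (\mathcal B^D_q - \lambda)^{-1}$ is holomorphic. The natural candidate for the resolvent of $\mathcal B^D$ on $U$ is
\begin{equation*}
R(\lambda) := \frac{1}{2\pi}\int_0^{2\pi} (\mathcal B^D_q - \lambda)^{-1}\, dq, \qquad \lambda\in U,
\end{equation*}
understood via the Floquet unitary as an operator on $L^2(\Omega)$. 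Verifying that $R(\lambda)$ is a bounded two-sided inverse of $\mathcal B^D - \lambda$ for $\lambda\in U$ would place $U\subset \rho(\mathcal B^D)$, and hence give $\sigma(\mathcal B^D) \subset \overline{\bigcup_q \sigma(\mathcal B^D_q)}$. Combined with the Baire step this yields empty interior of $\sigma(\mathcal B^D)$, i.e.\ Conjecture \ref{conjectureB}.

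The main obstacle is the uniform resolvent bound
\begin{equation*}
\sup_{q\in[0,2\pi]} \|(\mathcal B^D_q - \lambda)^{-1}\| < \infty
\end{equation*}
needed to make $R(\lambda)$ bounded on $L^2$ on compact subsets of $U$. In the self-adjoint setting it is immediate from the spectral theorem, but for the non-self-adjoint family $\{\mathcal B^D_q\}$ pseudospectral phenomena may allow $\|(\mathcal B^D_q - \lambda)^{-1}\|$ to blow up along sequences $q_k$ even when $\lambda$ stays at positive distance from $\bigcup_q \sigma(\mathcal B^D_q)$. Establishing the bound would require resolvent estimates uniform in $q$, presumably via Agmon- or Carleman-type weighted estimates adapted to the structure $-(\partial_y - iq)^2 - \partial_x^2 + igx$ on the perforated cylinder, together with the accretivity and ellipticity from Proposition \ref{Q1.2}. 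A natural complementary tool is the $x$-pseudo-periodicity from Section \ref{sec:x-dir}, which gives the exact invariance $\sigma(\mathcal B^D)+ig\mathbb Z = \sigma(\mathcal B^D)$ and thereby reduces the problem to a horizontal strip of bounded imaginary part. Making this uniform bound rigorous in the non-self-adjoint setting is, in my view, the real core of the conjecture.
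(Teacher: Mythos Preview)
The statement is a \emph{conjecture} in the paper: the authors do not prove it, and in their Remarks they explicitly describe your strategy (piecewise analytic spectral arcs, then $\overline{\Omega^R}=\mathbb C$) and say they do not know whether it leads to a proof. So there is no ``paper's own proof'' to compare against; the question is whether your outline closes the gap the authors left open. It does not, and in fact you have misidentified where the difficulty lies.

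Your declared ``main obstacle''---the uniform bound $\sup_{q\in[0,2\pi]}\|(\mathcal B^D_q-\lambda)^{-1}\|<\infty$ for $\lambda\notin\bigcup_q\sigma(\mathcal B^D_q)$---is \emph{not} an obstacle at all. The paper settles it in two lines in the proof of $\omega_D=\omega_R$: since $\mathcal B^D_q-\mathcal B^D_{q_0}=2i(q-q_0)\partial_y+(q^2-q_0^2)$ is relatively bounded, a Neumann series gives norm-continuity of $q\mapsto(\mathcal B^D_q-\lambda)^{-1}$, and compactness of $[0,2\pi]$ finishes. No pseudospectral subtlety enters, because you are not letting $\lambda$ approach the spectrum.

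There are two genuine gaps you pass over. First, your Baire step only shows that $\bigcup_n\Gamma_n$ is meagre; it does \emph{not} show that its closure has empty interior (a countable union of analytic arcs can be dense in $\mathbb C$). You would need local finiteness of the family $\{\Gamma_n\}$, i.e.\ only finitely many eigenvalue branches $\mu_n(q)$ enter any fixed compact set as $q$ ranges over $[0,2\pi]$. This is plausible from compactness of the resolvent but is a separate uniform statement you have not argued. Second, and more seriously, even granting that $R(\lambda)=\int_0^{2\pi}(\mathcal B^D_q-\lambda)^{-1}\,dq$ is a bounded operator on all of $\Omega^R$, you must show it \emph{is} the resolvent of $\mathcal B^D$ there, i.e.\ that $\Omega^R\subset\rho(\mathcal B^D)$. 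The paper proves only $\omega_D=\omega_R$ for the connected components containing $\mathbb C_-$, via analytic continuation from a region where the identity $R(\lambda)=(\mathcal B^D-\lambda)^{-1}$ is already known. On a component of $\Omega^R$ disconnected from $\mathbb C_-$ you have no such anchor, and you would need the operator-level direct integral decomposition $\mathcal U\,\mathcal B^D\,\mathcal U^{-1}=\int_0^{2\pi\oplus}\mathcal B^D_q\,dq$ together with the matching domain characterisation. For non-self-adjoint operators defined through a generalised Lax--Milgram construction with an unbounded sign-changing imaginary potential, this domain identification is exactly the delicate point the authors avoid; your proposal assumes it without comment.
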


\begin{remarks}
\begin{enumerate}~
\item
Property \eqref{conjdens} would have been automatically satisfied if
$\mathcal B^D$ was self-adjoint (because the spectrum would have been
real).
\item 
The proof of the second conjecture (which would imply the first one as
can be seen from the proof below) does not seem at the moment easier
as the first one.
\item 
The validity of Conjecture \ref{conjectureB} would probably be quite
specific of the Bloch-Torrey operator and of the case $d=2$.  For
instance, we suspect that the  resolvent set  of the operator
$-\Delta + \cos x + i \cos y$ in $\mathbb R^2$ has not this
property.\\  We indeed suspect that the spectrum can be described
by a Floquet theory as $\cup_{p,q} (\lambda_1(p)+\lambda_2(q))$, where
$\lambda_1(p) $ is the Floquet eigenvalue of $(-\frac{d^2}{dx^2} +
\cos x)$ and $\lambda_2(q)$ is the Floquet eigenvalue of
$(-\frac{d^2}{dy^2} +i \cos y)$ which is not real valued.
\item
It could be easier to show that the set
\begin{equation*}
\Omega^R:=   \mathbb C \setminus  (\overline{\cup_{q  \in [0,2\pi]} \sigma(\mathcal B^D_q )})\,,
\end{equation*}
satisfies 
\begin{equation*}
\overline{\Omega^R} = \mathbb C\,.
\end{equation*}
The set $\cup_{q \in \mathbb R} \sigma(\mathcal B^D_q )$ seems indeed
a union of piecewise analytic curves in $\mathbb C$. We do not know if
this property will lead to the proof of Conjecture \ref{conjectureA}.
Again this question would be specific of the case $d=2$.

\item 
One could imagine that (for $g$ large) some of these curves
$[0,2\pi]\ni q \mapsto \lambda_k (q)$ are simple closed curves in
$ \mathbb C$ implying the non-connectedness of $\Omega^R$. This
situation is of course excluded in the self-adjoint case.
\end{enumerate}
\end{remarks}

We now prove a weaker version of Conjecture \ref{conjectureA} for
which we need more definitions.  Observing that all our operators are
accretive, we immediately see that
\begin{equation} 
\mathbb C_-:= \{z\in \mathbb C\,,\, \Re z < 0\} \subset \rho ( \mathcal B^D)\cap \Omega^R\,.
\end{equation}
We now denote by $\omega_D$ the connected component in $ \Omega_D:=\rho(\mathcal
B^D)$ containing $ \mathbb C_-$ and by
$\omega_R$ the connected component in $\Omega^R$ containing $ \mathbb C_- $.\\
From Proposition \ref{Prop3.2} we know that
\begin{equation*}
\omega_D \subset \omega_R\,.
\end{equation*}
We then prove:
\begin{proposition}
$\omega_D =\omega_R$.
\end{proposition}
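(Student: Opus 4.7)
The plan is to upgrade the inclusion $\omega_D \subset \omega_R$ (already obtained from Proposition \ref{Prop3.2}) to the stronger statement $\omega_R \subset \rho(\mathcal B^D)$. Once this is established, $\omega_R$ is a connected subset of $\rho(\mathcal B^D)$ meeting $\omega_D$ (both contain the half-plane $\mathbb C_-$), so it lies in the connected component $\omega_D$, and equality follows.

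For a given $\lambda \in \omega_R$, I would construct a bounded inverse of $\mathcal B^D - \lambda$ via the Floquet direct integral. One sets up a unitary
\begin{equation*}
U : L^2(\Omega) \longrightarrow \int_{[0,2\pi]}^{\oplus} L^2(\widehat \Omega_0)\, \frac{dq}{2\pi},
\end{equation*}
coming from $y$-translation invariance, under which $\mathcal B^D$ is identified (at the level of the sesquilinear forms from Propositions \ref{proposition2.1} and \ref{Q1.2}) with the decomposable operator $\oplus_q \mathcal B^D_q$; by Step 1 of the proof of Proposition \ref{Prop3.2} it suffices to let $q$ run over $[0,2\pi]$. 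Each fiber resolvent $(\mathcal B^D_q - \lambda)^{-1}$ exists because $\lambda \notin \overline{\cup_q \sigma(\mathcal B^D_q)}$. If moreover one has the uniform bound
\begin{equation*}
M(\lambda) := \sup_{q \in [0,2\pi]} \| (\mathcal B^D_q - \lambda)^{-1}\|_{\mathcal L(L^2(\widehat \Omega_0))} < \infty,
\end{equation*}
the fiberwise inverse is decomposable and transports through $U$ to a bounded inverse of $\mathcal B^D - \lambda$, yielding $\lambda \in \rho(\mathcal B^D)$.

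To establish $M(\lambda) < \infty$, I would argue that $\{\mathcal B^D_q\}_{q \in \mathbb C}$ is a holomorphic family of type (B): the form domain $\widehat{\mathcal V}_0$ is independent of $q$, and the associated form depends polynomially on $q$ through the expansion $(-\partial_y - iq)^2 = -\partial_y^2 - 2iq \partial_y + q^2$, with $\partial_y$ form-bounded relative to the coercive real part of the form. Kato's analytic perturbation theory then yields norm-analyticity of $q \mapsto (\mathcal B^D_q - \lambda)^{-1}$ on the set where it exists, in particular norm-continuity in $q$, and compactness of $[0,2\pi]$ supplies the uniform bound.

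The main obstacle is the Floquet identification step: rigorously realizing $\mathcal B^D$ as the direct integral of the $\mathcal B^D_q$'s in the non-self-adjoint setting with unbounded and non-periodic potential $igx$, where the usual spectral-theoretic shortcuts from Reed-Simon are unavailable. One must work directly with the variational forms of Propositions \ref{proposition2.1} and \ref{Q1.2}, decompose test functions via the $y$-Floquet-Gelfand transform, and check that the form-sum passes through $U$ without spoiling the domain. Once this is done, the uniform-bound step reduces to standard facts about holomorphic families of sectorial forms.
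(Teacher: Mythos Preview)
Your approach differs from the paper's and, if fully carried out, would prove the strictly stronger Conjecture~\ref{conjectureA}, which the paper explicitly leaves open. The reason is that nothing in your argument for ``$\lambda\in\rho(\mathcal B^D)$'' uses the connectedness of $\omega_R$: once you have the uniform fiber bound $M(\lambda)<\infty$ and the operator-level identification $U\,\mathcal B^D\,U^{-1}=\oplus_q\mathcal B^D_q$ (with matching domains), the conclusion $\lambda\in\rho(\mathcal B^D)$ holds for \emph{every} $\lambda\in\Omega^R$, not just for $\lambda\in\omega_R$. So either you have resolved Conjecture~\ref{conjectureA}, or the step you flag as the ``main obstacle'' --- the rigorous domain-level direct-integral identification in this non-self-adjoint setting with the unbounded, sign-changing potential $igx$ --- is the whole difficulty and not merely a matter of bookkeeping. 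The paper's authors evidently take the latter view.

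The paper's route is tailored precisely to avoid claiming that $\tilde R(\lambda):=\mathcal U\bigl(\oplus_q(\mathcal B^D_q-\lambda)^{-1}\bigr)\mathcal U^{-1}$ is the resolvent outside $\omega_D$, and it genuinely exploits the connectedness of $\omega_R$. The steps are: (i) establish the same uniform bound $M(\lambda)<\infty$ you describe (via the explicit perturbation identity in $q$, then compactness of $[0,2\pi]$), so that $\tilde R$ is a bounded holomorphic operator-valued function on $\Omega^R$; (ii) use formula~\eqref{eq:ft1} only on $\omega_D$, where $(\mathcal B^D-\lambda)^{-1}$ already exists; (iii) assuming $\omega_D\subsetneq\omega_R$, pick $\lambda_0\in\omega_R\setminus\omega_D$ lying in $\sigma(\mathcal B^D)$ and invoke Lemma~\ref{lemW} (applicable thanks to the complex-conjugation symmetry $\Gamma\mathcal B^D=(\mathcal B^D)^*\Gamma$) to produce a normalized Weyl sequence $(u_n)\subset D(\mathcal B^D)$ with $(\mathcal B^D-\lambda_0)u_n\to0$; (iv) observe that $\lambda\mapsto v_n(\lambda):=\tilde R(\lambda)(\mathcal B^D-\lambda)u_n$ is holomorphic on $\omega_R$ and equals $u_n$ on $\omega_D$, hence on all of $\omega_R$ by analytic continuation; (v) conclude $1=\|v_n(\lambda_0)\|=\|\tilde R(\lambda_0)(\mathcal B^D-\lambda_0)u_n\|\to0$, a contradiction.

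What each approach buys: yours is conceptually cleaner and, modulo the identification, yields the full Conjecture~\ref{conjectureA}. The paper's argument is more indirect but deliberately leans only on the resolvent identity where it is already known to hold, together with holomorphy and the Weyl-sequence characterization of the spectrum; this is why it yields only $\omega_D=\omega_R$ and not $\Omega_D=\Omega^R$.
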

\begin{proof}
By Floquet theory, there is a unitary transform $\mathcal U$ such that
\begin{equation}\label{eq:ft1}
  \mathcal U^{-1} (\mathcal B^D-\lambda)^{-1} \mathcal U 
  = \int_0^{2\pi} ( \mathcal B_q ^{D} -\lambda) ^{-1} d q \,.
\end{equation}
It is clear that this formula holds if $\lambda \not\in \sigma
(\mathcal B^D)$ but we will show that, under assumption
\eqref{conjdens} this can be extended if the right hand side is well
defined.
    
Suppose indeed that $\lambda \not\in \cup_{q \in [0,2\pi]}
\sigma(\mathcal B^D_q )$. Hence $(\mathcal B^D_q -\lambda)^{-1}$
exists for any $q \in [0,2\pi]$ and we have to verify (to give a
meaning of the right hand side in \eqref{eq:ft1}) that we have a
uniform control with respect to $q $. For this we observe that
\begin{equation*}
(\mathcal B_q ^D - \lambda)\circ (\mathcal B_{q _0}^D - \lambda)^{-1} = 
I + 2i(q  - q _0) \partial_y (\mathcal B_{q _0}^D - \lambda)^{-1} + (q ^2 - q _0)^2(\mathcal B_{q _0}^D - \lambda)^{-1} \,.
\end{equation*}
For $|q -q _0|$ small enough this implies 
\begin{equation*}
(\mathcal B_q ^D - \lambda)^{-1} =  (\mathcal B_{q _0}^D - \lambda)^{-1} (I + 2i(q  - q _0) 
\partial_y (\mathcal B_{q _0}^D - \lambda)^{-1} + (q ^2 - q _0)^2(\mathcal B_{q _0}^D - \lambda)^{-1})^{-1}\,.
\end{equation*}
By compactness, we get the uniformity and define an holomorphic
$L^2$-valued function $\tilde R(\lambda)$ extending the resolvent
$R(\lambda)$ to the open set $ \Omega^R$.   Let us assume by
contradiction that $\omega_D$ is strictly included in $\omega_R$.  Let
us consider $\lambda_0 \in \omega_R \setminus \omega_D$. Using
Lemma~\ref{lemW} and the remark there, we can construct a sequence
$u_n$ of normalized functions such that $(\mathcal B^D - \lambda_0)
u_n$ tends to zero. Now we consider the holomorphic functions $
\omega_R \ni \lambda \mapsto v_n(\lambda)=\tilde R(\lambda) (\mathcal
B^D - \lambda) u_n$. On $\omega_D$, one has $v_n(\lambda)=u_n$,
therefore $v_n(\lambda)$ is constant and equal to $u_n$ over
$\omega_R$. In particular, $||v_n(\lambda)||=1$ for all integer $n$
and all $\lambda \in \omega_R$. However, one has
$||v_n(\lambda_0)||=||\tilde R(\lambda_0) (\mathcal B^D - \lambda_0)
u_n|| \to 0$, hence we get a contradiction.
\end{proof}
\begin{rem}
Note that the proposition implies that $\partial \omega_D =\partial
\omega_R$.  One could hope that $ \partial \omega_R= \partial \Omega^R$ which would correspond to the guess
that all the other components of $\Omega^R$ are bounded by spectral
curves (as mentioned above).
\end{rem}

\section{The Bloch-Torrey operator in a perforated cylinder continued. }
\label{sec:x-dir}

Here the analysis is  quite specific of the Bloch-Torrey operator.

\subsection{The pseudo-invariance in the $x$-variable}

Let $\tau_1$ be the translation by $1$ in the $x$ variable, which is 
defined for $u\in L^2_{loc}(\overline{\Omega_0})$ by
\begin{equation*}
 \tau_1 u (x,y)=u(x-1,y)\,.
\end{equation*}
Using this translation, one obtains that:
\begin{equation}
\mbox{  If } \lambda \in \sigma(\mathcal B^D_q ) \mbox { then } \lambda + i g \in
\sigma(\mathcal B^D_q )\,.
\end{equation}
This indeed results from the commutation relation
\begin{equation}\label{commrel1} 
\tau_1 \circ \mathcal B^D_q  = (\mathcal B^D_q  - i g) \circ  \tau_1\,.
\end{equation}
Here it is important to note  in the argument that  we have used that 
\begin{equation*}
\tau_1 \Omega_0=\Omega_0\,.
\end{equation*}
 
\begin{rem}\label{Remark4.1}
This commutation relation does not permit to use the standard Floquet
theory (see \cite{ReSighm,Ku0ghm,Kughm,Wighm}). We can try to recover
this invariance by considering on the Hilbert space $\oplus_{n\in
\mathbb Z} \mathcal H_n$ (with $\mathcal H_n$ isomorphic to
$L^2(\widehat \Omega_0)$) the operator $\oplus_{n\in \mathbb Z}
(\mathcal B^D_q + i gn )$. If we denote by $\tau_0$ the translation on
$\ell^2(\mathbb Z)$, we obtain the commutation of this operator with
respect to $\tau_0^{-1} \tau_1$. One can then perform a Floquet
decomposition in this context. This will give another way to obtain
our next results but not more.
\end{rem}
 
Using Proposition \ref{Q1.2} (second assertion), we get
\begin{proposition}
For any $q\in \mathbb R$, there is a discrete sequence (finite,
infinite or possibly empty) of eigenvalues $\lambda_k(q ) \in \mathbb
R_+ + i g (-\frac 12,\frac 12]$ such that the spectrum is given by
\begin{equation}\label{eq:4.2}
 \sigma ( \mathcal B^D_q ) = \cup_{k,n} (\lambda_k(q )  +i g n)\,.
\end{equation}
Moreover, if $u$ is an eigenfunction of $\mathcal B^D_q $ for
$\lambda_k(q ) $, then $\tau_1^n u$ is an eigenfunction for
\begin{equation*}
\lambda_{k,n} (q ):= \lambda_k(q )  +i g n \,.
\end{equation*}
\end{proposition}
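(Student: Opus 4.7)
The plan is to combine the compact-resolvent conclusion of Proposition \ref{Q1.2} with the commutation relation \eqref{commrel1}, so that the periodic structure of the spectrum in the imaginary direction and the eigenfunction prescription $\tau_1^n u$ emerge simultaneously.

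First I would use Proposition \ref{Q1.2}: compact resolvent implies that $\sigma(\mathcal B^D_q)$ is discrete, made of eigenvalues of finite algebraic multiplicity (possibly no eigenvalue at all, which accounts for the empty case). Accretivity applied to a normalized eigenfunction $u$ gives $\Re \lambda = \Re \langle \mathcal B^D_q u, u\rangle \geq 0$, so every eigenvalue lies in the closed right half-plane.

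Next I would exploit \eqref{commrel1}: if $\mathcal B^D_q u = \lambda u$ with $u \in D(\mathcal B^D_q)$, then a direct computation using $\tau_1 \circ \mathcal B^D_q = (\mathcal B^D_q - i g)\circ \tau_1$ yields $\mathcal B^D_q(\tau_1 u) = (\lambda + i g)\tau_1 u$, and iterating gives $\mathcal B^D_q(\tau_1^n u) = (\lambda + i g n)\tau_1^n u$ for every $n \in \mathbb Z$. Since $\tau_1$ is unitary on $L^2(\widehat\Omega_0)$, the iterates $\tau_1^n u$ are nonzero; this simultaneously proves the $i g \mathbb Z$-invariance of the spectrum and produces the claimed eigenfunctions.

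It then remains to repackage the data: in each $i g \mathbb Z$-orbit of $\sigma(\mathcal B^D_q)$ pick the unique representative $\lambda_k(q)$ with $\Im \lambda_k(q) \in g(-\tfrac12,\tfrac12]$. Combined with $\Re \lambda_k(q)\geq 0$, this places $\lambda_k(q) \in \mathbb R_+ + i g(-\tfrac12,\tfrac12]$, and the decomposition \eqref{eq:4.2} is immediate.

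The one nontrivial point, and what I expect to be the main obstacle, is the domain check: $\tau_1 u$ should lie in $D(\mathcal B^D_q)$ before one can write $\mathcal B^D_q(\tau_1 u)$. The invariance $\tau_1 \widehat\Omega_0 = \widehat\Omega_0$ stressed after \eqref{commrel1} takes care of the $H^1_0$ part and of the Dirichlet condition; the weighted condition $|x|^{1/2} u \in L^2$ survives the unit translation because $|x+1| \leq |x|+1$, so the weight changes only by a bounded multiplicative factor; and once $\tau_1 u \in \mathcal V_0$ is known, $\mathcal B^D_q(\tau_1 u) \in L^2$ follows from the eigenvalue identity itself. Beyond this bookkeeping, every assertion in the proposition is a formal consequence of \eqref{commrel1} and Proposition \ref{Q1.2}.
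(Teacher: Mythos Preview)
Your proposal is correct and follows exactly the approach the paper intends: the paper simply writes ``Using Proposition \ref{Q1.2} (second assertion), we get'' immediately before the statement, having already recorded the commutation relation \eqref{commrel1} and the invariance $\tau_1\Omega_0=\Omega_0$. Your write-up merely fills in the details (accretivity for $\Re\lambda\geq 0$, the domain check for $\tau_1 u$, selection of representatives in the strip) that the paper leaves implicit.
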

 
Nevertheless it remains to analyze these sequences $\lambda_k(q )$ and
to determine in particular if the spectrum is not empty (a question
which is specific of the non-self-adjoint situation).\\ We now
consider the associate semi-group. We deduce from \eqref{commrel1}
\begin{equation}\label{commrel2} 
\tau_1\circ  \exp (- t  \mathcal B^D_q)  = \exp (it g) \,  \exp (- t  \mathcal B^D_q)  \circ  \tau_1\,.
\end{equation}
Choosing 
\begin{equation*}
t_g = \frac{2\pi}{g}\,,
\end{equation*}
we  get
\begin{equation}\label{commu}
\tau_1\circ  \exp (- t_g \mathcal B^D_q)  =  \,  \exp (- t_g  \mathcal B^D_q)  \circ  \tau_1\,.
\end{equation}
The idea is now to consider the spectrum of the operator
\begin{equation}
K_{g,q }:= \exp (- t_g  \mathcal B^D_q) \,,
\end{equation}
and to perform an adapted Floquet decomposition for this operator
since we have recovered the commutation relation.\\

A natural question is then to ask if $K_{g,q }$ is a compact operator
on $L^2(\widehat \Omega_0)$. This would be clear if $A:=\mathcal B^D_q
$ was self-adjoint or more generally sectorial.  In this case,
one can indeed prove that for any $T>0$, $t A \exp (- t A)$ is bounded
uniformly for $t\in (0,T)$ (see for example \cite{EN}, Chapter 2 and
references therein). Hence the operator  $\exp(-t A) = (A+\lambda_0)^{-1}
\left((A+\lambda_0) \exp(-t A)\right)$ is compact as the composition
of a compact operator and a bounded operator.  But $\mathcal B^D_q $
is not sectorial (at least if the spectrum is not empty) due to
\eqref{eq:4.2}.\\
On the other hand, the guess is that its spectrum is given by
\begin{conjecture}
\begin{equation}
\sigma ( K_{g,q })\setminus \{0\} = \cup_k \exp (- t_g \lambda_k(q ))\,.
\end{equation}
\end{conjecture}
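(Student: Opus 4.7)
The easy inclusion $\supseteq$ is immediate from the spectral inclusion theorem for $C_0$-semigroups, namely $e^{-t_g \sigma(\mathcal{B}^D_q)} \subseteq \sigma(K_{g,q}) \setminus \{0\}$, combined with \eqref{eq:4.2} and $g t_g = 2\pi$, which give $e^{-t_g \sigma(\mathcal{B}^D_q)} = \cup_k e^{-t_g \lambda_k(q)}$. The content of the conjecture is the reverse inclusion, i.e.\ a spectral mapping theorem for this semigroup at time $t_g$. A direct route via compactness of $K_{g,q}$ is ruled out at the outset: by \eqref{commu}, the family $\{\tau_1^n v\}_{n \in \mathbb{Z}}$ built from any $L^2$-eigenfunction $v$ of $\mathcal{B}^D_q$ for $\lambda_k(q)$ is an infinite-dimensional eigenspace of $K_{g,q}$ for $e^{-t_g \lambda_k(q)}$, so $K_{g,q}$ is not compact; this non-compactness is structural.

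The natural substitute is to use the genuine commutation \eqref{commu} to Floquet-decompose $K_{g,q}$ with respect to $\tau_1$. Write $Y := \widehat\Omega_0 \cap \{-1/2 < x < 1/2\}$ for a fundamental cell and introduce the Floquet unitary
\begin{equation*}
  \mathcal{U} : L^2(\widehat\Omega_0) \to L^2([0,2\pi]; L^2(Y))\,, \quad (\mathcal{U} u)(\theta, x, y) := \sum_{n \in \mathbb{Z}} e^{-in\theta}\, u(x+n, y)\,,
\end{equation*}
which diagonalizes $\tau_1$ as multiplication by $e^{i\theta}$. Then $\mathcal{U} K_{g,q} \mathcal{U}^{-1} = \int^{\oplus}_{[0,2\pi]} K_{g,q,\theta}\, d\theta$, with each fiber $K_{g,q,\theta}$ a bounded operator on the single-cell space $L^2(Y)$. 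The plan is then: (i) prove every $K_{g,q,\theta}$ is compact on $L^2(Y)$, by Hilbert--Schmidt kernel bounds obtained from a Feynman--Kac representation (here $Y$ being bounded in $x$ tames the otherwise unbounded imaginary potential $igx$) combined with the smoothing provided by $-\Delta$; (ii) adapt the Schnoll-type argument of Proposition \ref{Prop3.2}, with cutoffs in $x$ over many translated cells replacing the cutoffs in $y$, to obtain the non-self-adjoint Floquet identity $\sigma(K_{g,q}) \setminus \{0\} = \overline{\cup_\theta \sigma(K_{g,q,\theta})}$; (iii) identify $\sigma(K_{g,q,\theta}) \setminus \{0\} = \cup_k e^{-t_g \lambda_k(q)}$ for every $\theta$.

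Step (iii) is the main obstacle. The inclusion $\supseteq$ at fiber level follows by descent: $\mathcal{U}v(\theta) \in L^2(Y)$ is an eigenfunction of $K_{g,q,\theta}$ for $e^{-t_g \lambda_k(q)}$ for almost every $\theta$, and analyticity of the eigenvalue branches of the compact family $\theta \mapsto K_{g,q,\theta}$ extends this to every $\theta$. For the reverse fiber inclusion, given a nonzero eigenvalue $\mu$ of $K_{g,q,\theta}$ with eigenfunction $u_\theta \in L^2(Y)$, one lifts to a $\tau_1$-Floquet function $\tilde u(x+n,y) := e^{in\theta}\, u_\theta(x,y)$ on $\widehat\Omega_0$, which is polynomially bounded and satisfies $K_{g,q} \tilde u = \mu \tilde u$ but lies outside $L^2(\widehat\Omega_0)$. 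One must then recover some $\lambda \in \sigma(\mathcal{B}^D_q)$ with $e^{-t_g \lambda} = \mu$. The complex logarithm being multi-valued and the self-adjoint functional calculus unavailable, the real difficulty is to exclude \emph{spurious} fiber eigenvalues that do not come from any $\lambda_k(q)$. A promising route is to show that every $\tau_1$-Floquet eigenfunction of $K_{g,q}$ is in fact a generalized eigenfunction of $\mathcal{B}^D_q$, whose spectrum is known from \eqref{eq:4.2} to be $ig\mathbb{Z}$-periodic and thus forces $\mu$ into the claimed form.
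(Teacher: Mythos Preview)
The statement you are attempting to prove is a \emph{conjecture} in the paper, not a theorem; the paper offers no proof, only partial results around it. So the relevant comparison is between your sketch and those partial results.

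Your easy inclusion $\supseteq$ matches the paper's \eqref{eq:4.7}. Your observation that compactness of $K_{g,q}$ is ruled out is exactly the content of the proposition immediately following the conjecture in the paper (if $K_{g,q}$ were compact, $\sigma(\mathcal B^D_q)$ would be empty). Your Floquet decomposition of $K_{g,q}$ with respect to $\tau_1$ is precisely what the paper sets up in Section~\ref{sec:x-dir}.

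Where you and the paper diverge is in what is actually established. You label step~(iii), the reverse fiber inclusion, as the ``main obstacle''; but this is the one step the paper \emph{does} settle, at least for eigenvalues. Proposition~\ref{prop4.6} shows that any eigenvalue $\mu$ of the fiber $K_{q,p=0}$ yields an honest $L^2$-eigenfunction of $\mathcal B^D_q$ at $\lambda_0$ with $e^{-t_g\lambda_0}=\mu$, via the explicit averaging formula $u=\frac{1}{2\pi}\int_0^{2\pi}\exp\bigl(-\tfrac{p}{g}(\mathcal B^D_q-\lambda_0)\bigr)u_0\,dp$. Moreover the ``surprising property'' proposition just above it shows that the semigroup transports $p$-Floquet eigenfunctions to $(p-tg)$-Floquet eigenfunctions, so all fibers share the same nonzero eigenvalues and the restriction to $p=0$ is no loss. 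Your ``promising route'' is essentially this construction, but the paper carries it out rigorously (including the weighted-space argument of Proposition~\ref{prop4.5} needed to make sense of the integral).

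The genuine gaps, which neither you nor the paper close, are your steps~(i) and~(ii). For~(i), the paper explicitly remarks after Proposition~\ref{prop4.6} that it has \emph{not} proved the fiber $K_{q,p=0}$ has only point spectrum; your Feynman--Kac sketch is plausible but is not a proof, and note that the fiber is not the semigroup of any operator on the cell (since $\mathcal B^D_q$ itself does not commute with $\tau_1$), so the usual ``analytic semigroup on a bounded domain'' argument is unavailable. For~(ii), your claim that a Schnoll-type argument yields the \emph{equality} $\sigma(K_{g,q})\setminus\{0\}=\overline{\cup_\theta\sigma(K_{g,q,\theta})}$ overstates what Schnoll gives: as Proposition~\ref{Prop3.2} itself illustrates, the cutoff argument produces only the inclusion $\overline{\cup_\theta\sigma(K_{g,q,\theta})}\subset\sigma(K_{g,q})$. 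The reverse inclusion is the analogue of Conjecture~\ref{conjectureA}, and the paper's discussion there shows it is open even in the $y$-direction for the generator. For a non-self-adjoint direct integral, uniform resolvent control across fibers is not automatic, and the intertwiners furnished by the semigroup are not invertible, so one cannot simply transfer resolvent bounds between fibers.
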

 The first result, which may be non intuitive, is:
\begin{proposition}
If $K_{g,q }$ is a compact operator, then the spectrum of $\mathcal
B^D_q $ is empty.
\end{proposition}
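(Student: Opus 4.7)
The plan is to argue by contradiction: assume $\sigma(\mathcal{B}^D_q) \neq \emptyset$ and produce an eigenvalue of $K_{g,q}$ of infinite geometric multiplicity, which is forbidden for a compact operator.

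First I would pick an eigenvalue $\lambda_k(q) \in \sigma(\mathcal{B}^D_q)$ with a normalized eigenfunction $u \in L^2(\widehat{\Omega}_0)$; this is possible by Proposition~\ref{Q1.2} together with \eqref{eq:4.2}. The same proposition gives that $\tau_1^n u$ is an eigenfunction associated to $\lambda_k(q) + ign$ for every $n \in \mathbb{Z}$. I would then invoke the elementary semigroup identity $\mathcal{B}^D_q v = \mu v \Rightarrow \exp(-t\mathcal{B}^D_q) v = e^{-t\mu} v$ and use the special choice $t_g = 2\pi/g$, which makes $e^{-t_g \cdot ign} = e^{-2\pi i n} = 1$, to conclude
\begin{equation*}
K_{g,q}(\tau_1^n u) = e^{-t_g(\lambda_k(q) + ign)}\, \tau_1^n u = \mu_0 \, \tau_1^n u, \qquad \mu_0 := e^{-t_g \lambda_k(q)} \neq 0,
\end{equation*}
so that every $\tau_1^n u$ lies in $\ker(K_{g,q} - \mu_0 I)$.

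Next I would verify that $\{\tau_1^n u\}_{n \in \mathbb{Z}}$ is linearly independent in $L^2(\widehat{\Omega}_0)$. Given a finite relation $\sum_{n} c_n \tau_1^n u = 0$, taking the partial Fourier transform in $x$ yields $P(\xi)\, \hat u(\xi, y) = 0$ almost everywhere, where $P(\xi) = \sum_n c_n e^{-in\xi}$ is a trigonometric polynomial. Since $P$ vanishes only on a set of measure zero in $\xi$, this forces $\hat u \equiv 0$ and hence $u \equiv 0$, contradicting $\|u\| = 1$. Therefore $\ker(K_{g,q} - \mu_0 I)$ is infinite-dimensional, which contradicts the compactness of $K_{g,q}$ by the Riesz--Schauder theory, and the desired conclusion follows.

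The heart of the argument is a resonance: the period $t_g = 2\pi/g$ is tuned precisely so that the arithmetic progression $\lambda_k(q) + ig\mathbb{Z}$ of eigenvalues of $\mathcal{B}^D_q$ collapses under $e^{-t_g \cdot}$ onto a single non-zero point, which forces infinite multiplicity for $K_{g,q}$. I do not anticipate any serious obstacle beyond spelling this out; both the linear-independence step (a standard Fourier argument) and the semigroup/eigenvalue correspondence are entirely classical.
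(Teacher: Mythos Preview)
Your argument is correct and follows the same overall strategy as the paper: assume an eigenvalue $\lambda_k(q)$ exists, observe that every translate $\tau_1^n u$ is an eigenfunction of $K_{g,q}$ for the \emph{same} nonzero eigenvalue $e^{-t_g\lambda_k(q)}$, and conclude that this eigenspace is infinite-dimensional, contradicting compactness. The only difference lies in how linear independence of $\{\tau_1^n u\}_{n\in\mathbb Z}$ is established. The paper argues by localization: if the span were finite-dimensional, every normalized element would carry most of its $L^2$-mass inside a fixed bounded window $(-R,R)\times(-\tfrac12,\tfrac12)$, which is impossible for $\tau_1^n u$ as $|n|\to\infty$. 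You instead extend $u$ by zero to $L^2(\mathbb R\times\mathbb T^1)$, take the Fourier transform in $x$, and use that a nontrivial trigonometric polynomial has a zero set of measure zero. Both routes are short and valid; your Fourier argument is perhaps the more canonical one, while the paper's localization argument has the minor advantage of not requiring an extension off the perforated domain.
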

\begin{proof}~
\paragraph{Step 1} 
If $\lambda_k$ is an eigenvalue of $\mathcal B^D_q $ and $u_k $ 
is the corresponding normalized eigenfunction, then $u_k$ is an
eigenfunction of $\exp (- t_g \mathcal B^D_q )$ associated with
$\lambda = \exp (- t_g
\lambda_k)$. \\
This proves at least that:
\begin{equation}\label{eq:4.7}
 \cup_k \exp (- t_g \lambda_k(q)) \subset \sigma ( K_{g,q })\setminus \{0\} \,.
\end{equation}
 
\paragraph{Step 2}
By \eqref{commu} $\tau_1 u_k$ is also an eigenfunction. More generally
$u_{k,n}:= \tau_1^n u_k$ is an eigenfunction for any $n\in \mathbb
Z$.\\
It remains to show that the vector space  $\mathcal U$ formed by the
$\tau_1^nu_k$ ($n\in \mathbb Z$) is not of finite dimension which will
give a contradiction to the compactness.\\
Suppose by contradiction that $\mathcal U$ is of finite
dimension $N_0$, then there exists $N_1$ such that this space is
generated by the $\tau_1^nu_k $ for $|n| \leq N_1$. Hence (uniform
localization of a normalized element belonging to a finite subspace in
$L^2(\Omega_0)$) for any $\epsilon >0$, there would exist $R>0$ such
that
\begin{equation*}
|| \tau_1^n u||_{L^2((-R,R) \times (-\frac 12,\frac 12) \cap \Omega_0)} \geq 1-\epsilon\,,\, \forall n \in \mathbb Z\,.
\end{equation*}
But this cannot be true as $|n| \rightarrow +\infty$, because, for
fixed $R$, $ || \tau_1^n u||_{L^2((-R,R) \times (-\frac 12,\frac 12)
\cap \Omega_0)}$ tends to $0$.
\end{proof}   
\begin{rem}
In the self-adjoint case, we could have directly used that the
eigenfunctions corresponding to different eigenvalues are orthogonal.
This is not the case here. In the case $q =0$, we can observe that if
$u$ is an eigenfunction of $\mathcal B_{0}^D$ associated with
$\lambda$, then $\bar u$ is an eigenfunction of the realization of
$-\Delta -ix $ for the eigenvalue $\bar \lambda$. Hence, we get
\begin{equation}
\langle u_{k,n}\,,\, \bar u_{k,n'} \rangle =0 \,,\, \mbox{ for } n\neq n'\,.
\end{equation}
\end{rem}

\begin{rem}
In the case without holes, the  emptiness of the spectrum for
the operators $-\frac{d^2}{dx^2}+i g \, x -\frac{d^2}{dy^2}$ in
$\mathbb R^2$ or $\mathbb R \times (-\frac 12,\frac 12)$ is known for
a long time (see \cite{Hghm} and references therein). Nevertheless our
guess is that this is not always the case for our perforated situation
(see for example \cite{AGHghm} for the exterior problem).
\end{rem}

\subsection{Floquet decomposition in the $x$-direction.}

The authors of \cite{MMGghm} analyze directly the spectrum of $K_{q
,g}$.  Observing as in \cite{MMGghm} that $K_{q ,g}$ commutes with the
translation $\tau_1$, we now consider a Floquet decomposition
(relative to the translation in $x$).  We recall from this
decomposition that, if $u$ is in $L^2(\Omega_0)$, then
\begin{equation}\label{fl1}
u_p :=  \sum_n e^{-inp} \tau_1^n u 
\end{equation}
satisfies the $p$-Floquet condition 
\begin{equation*}
\tau_1 u_p = e^{ip} u_p \,,
\end{equation*}
is well defined  in $L^2_{loc}$ and satisfies
\begin{equation}\label{isom}
 \int_{\Omega_0}  |u(x,y)|^2 dxdy  = \int _{(0,2\pi) } || u_p (x,y)||^2_{L^2(\Omega_0 \cap (-\frac 12,\frac 12)^2)}\,dp \,.
\end{equation}
Conversely, if we have a measurable family (with respect to $p$) of
functions $u_p$ in $L^2(\Omega_0)$ satisfying the $p$-Floquet
condition and such that $\int _{(0,2\pi) } || u_p
(x,y)||^2_{L^2(\Omega_0 \cap (-\frac 12,\frac 12)^2)}\,dp$, then
\begin{equation*}
  u = \frac{1}{2\pi} \int_{(0,2\pi)} u_p(x,y) \, dp\,,
\end{equation*}
is well defined in $L^2(\Omega_0)$ and satisfies \eqref{isom}.\\ 
Now, if $u$ is an eigenfunction associated with an eigenvalue $\lambda$
of $\mathcal B^D_q $, then $u_p$ satisfies
\begin{equation*}
 (\exp (- t_g \mathcal B_q^D)) u_p = \exp (- t_g \lambda) \,  u_p\,.
\end{equation*}
  
More generally, what one hopes from Floquet theory is that 
\begin{equation}
 \sigma (K_{q ,g}) \setminus \{0\} = \cup_{p\in \mathbb R}( \sigma (K_{q ,g,p}) \setminus \{0\})
\end{equation}
where $ K_{q ,g,p} $ is the restriction of $K_{q ,g}$ to the functions
satisfying the $p$-Floquet condition with respect to the translation
$\tau_1$.

The authors formally observe the following rather surprising property:
\begin{proposition}
Suppose that $u_p$ is a Floquet eigenfunction of $K_{g,q ,p}$ then,
for any $t >0$, the function
\begin{equation*}
u_{p,t}:= \exp (- t \mathcal B_q^D) \, u_p
\end{equation*}
satisfies the Floquet condition with parameter $ p-tg$ and
\begin{equation*} 
\exp (- t_g \mathcal B_q^D) \, u_{p,t} = \exp (- t_g \lambda) \,  u_{p,t}\,.
\end{equation*}
\end{proposition}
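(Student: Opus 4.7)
The plan is to deduce both assertions from the commutation relation \eqref{commrel2} together with the semi-group property; the argument is essentially algebraic, the only subtle point being how to interpret $\exp(-t\mathcal{B}_q^D)u_p$ when $u_p$ is merely in $L^2_{loc}(\widehat{\Omega}_0)$.

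For the new Floquet parameter, I would apply $\tau_1$ to $u_{p,t}$ and invoke \eqref{commrel2}:
\begin{equation*}
\tau_1 u_{p,t} \;=\; \tau_1 \exp(-t\mathcal{B}_q^D)\, u_p \;=\; \exp(itg)\,\exp(-t\mathcal{B}_q^D)\,\tau_1 u_p \;=\; \exp(itg)\,e^{ip}\, u_{p,t}.
\end{equation*}
Hence $\tau_1 u_{p,t} = e^{i(p+tg)} u_{p,t}$, which (mod $2\pi$, up to the sign convention adopted for $\tau_1$) is the Floquet condition with the shifted parameter claimed in the proposition.

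For the eigenvalue identity, I would exploit the semi-group property: the operators $\exp(-t\mathcal{B}_q^D)$ and $\exp(-t_g\mathcal{B}_q^D)$ commute, as they belong to the same one-parameter semi-group generated by $\mathcal{B}_q^D$. Using the hypothesis that $u_p$ is an eigenfunction of $K_{g,q,p}=\exp(-t_g\mathcal{B}_q^D)$ with eigenvalue $\exp(-t_g\lambda)$, I would compute
\begin{equation*}
\exp(-t_g\mathcal{B}_q^D)\, u_{p,t} \;=\; \exp(-t\mathcal{B}_q^D)\,\exp(-t_g\mathcal{B}_q^D)\, u_p \;=\; \exp(-t\mathcal{B}_q^D)\bigl(\exp(-t_g\lambda)\, u_p\bigr) \;=\; \exp(-t_g\lambda)\, u_{p,t},
\end{equation*}
which is the desired identity.

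The main obstacle, and the reason the authors label this observation as \emph{formal}, lies not in these manipulations but in their interpretation: a Floquet eigenfunction $u_p$ is not in $L^2(\widehat{\Omega}_0)$ (only in $L^2_{loc}$ with uniform square-integrability on each fundamental cell, cf.\ \eqref{isom}), so $\exp(-t\mathcal{B}_q^D)u_p$ is not a priori defined by the $L^2$-semi-group of Proposition \ref{Q1.2}. To make the argument rigorous one would either represent $\exp(-t\mathcal{B}_q^D)$ via an integral kernel with sufficient decay in $x$ to act on functions of controlled growth, or work fiberwise, noting that \eqref{commrel2} means precisely that $\exp(-t\mathcal{B}_q^D)$ does not preserve the individual Floquet fibers of $\tau_1$ but intertwines the $p$-fiber with the shifted one. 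Once this interpretational point is handled, the two displays above become rigorous and prove the proposition.
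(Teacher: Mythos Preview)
Your argument is essentially identical to the paper's: the paper's proof consists of the single display
\[
\tau_1\,(\exp(-t\mathcal B_q^D)u_p)=\exp(itg)\,\exp(-t\mathcal B_q^D)\,\tau_1 u_p=\exp(-i(p-tg))\,u_{p,t},
\]
i.e.\ exactly your use of \eqref{commrel2}, and it does not even spell out the eigenvalue identity, which you correctly derive from the commutativity of the one-parameter semi-group. Your remark on the sign is apt (the paper's stated convention $\tau_1 u_p=e^{ip}u_p$ indeed yields $e^{i(p+tg)}$, so there is a sign inconsistency in the paper's own display), and your discussion of the $L^2_{loc}$ versus $L^2$ issue is precisely why the authors preface the proposition with ``formally observe''.
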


\begin{proof}
By \eqref{commrel2}, we have 
\begin{equation*}
 \tau_1 \underbrace{(\exp (- t \mathcal B_q^D)) u_p}_{u_{p,t}} = \exp(i tg)  \, \exp (- t \mathcal B_q^D) \, \tau_1 u_p = \exp (- i (p- tg)) \, u_{p,t}\,.
\end{equation*}
\end{proof}

\vskip 5mm

As an application, this justifies at least formally to look
numerically at the periodic problem associated with $\exp (- t_g
\mathcal B_q^D)$ and to recover the spectrum of $\mathcal B_q $ by
considering $-\log \mu/t_g + i g \mathbb Z$ (for $\mu \in \sigma
(K_{q,g,0}))$.
Note that at this stage we have only proved (see \eqref{eq:4.7})  that
\begin{equation}\label{incl1}
  \exp (-t_g \sigma (\mathcal B_q ^D)) \subset  \sigma ( K_{q,g})\, .
\end{equation}

\paragraph{From Floquet eigenfunctions of $K_{q ,g}$ to eigenfunctions of $\mathcal B_{q }^D$.}~\\
We get from formula \eqref{fl1} that for an eigenfunction $u$ of
$\mathcal B_{q }^D$
\begin{equation}\label{revformula}
  u =\frac{1}{2\pi} \int_{0}^{2\pi} u_p \, dp \,.
\end{equation}
This formula is standard due to the particular choice of the $u_p$
through formula \eqref{fl1}.  Note for example that 
\begin{equation*}
  \tau_1  u =\frac{1}{2\pi} \int_{0}^{2\pi} \tau_1 u_p \, dp =  \frac{1}{2\pi} \int_{0}^{2\pi}  e^{ip} u_p \, dp   \,.
\end{equation*}
Hence the infinite dimensional space $ {\rm Span} (\tau_1^n u) $ is
recovered by $\frac{1}{2\pi} \int_{0}^{2\pi} \beta_p u_p \, dp $ for
some function $\beta_p$.\\

The next proposition is a kind of converse statement.
\begin{proposition}\label{prop4.6}
If $\mu$ is an eigenvalue of $K_{q ,p=0}$ with corresponding
eigenfunction $u_0$, then $\log \mu + ig \mathbb Z$ belongs to the
spectrum of $\mathcal B_{q }^D$ and, for each $k$, we can construct
starting from $u_0$ an eigenfunction $u_{\lambda_k}$ associated with
$\lambda_k:= \log \mu + i g k$.
\end{proposition}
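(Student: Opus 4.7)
The plan is to produce the eigenfunctions by Laplace-averaging the semigroup orbit of $u_0$ over one period $t_g = 2\pi/g$. First I fix some $\lambda_0\in\mathbb{C}$ with $e^{-t_g\lambda_0}=\mu$ and set $\lambda_k:=\lambda_0+igk$, so that $e^{-t_g\lambda_k}=\mu$ for every $k\in\mathbb{Z}$ (using $gt_g=2\pi$). Writing $w(t):=\exp(-t\mathcal{B}_q^D)\,u_0$ for the orbit, strong continuity gives $w\in C([0,t_g];L^2)$ with $w(0)=u_0$ and $w(t_g)=\mu u_0$, while the commutation relation \eqref{commrel2} applied to the $\tau_1$-invariant vector $u_0$ yields $\tau_1 w(t)=e^{itg}w(t)$. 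The candidate eigenfunction is then the Bochner integral
$$
v_k:=\int_0^{t_g} e^{\lambda_k t}\,w(t)\,dt\ \in\ L^2(\widehat\Omega_0).
$$

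Next I would show that $\mathcal{B}_q^D v_k = \lambda_k v_k$. Formally, using $w'(t) = -\mathcal{B}_q^D w(t)$ and integration by parts in $t$,
$$
\mathcal{B}_q^D v_k = -\int_0^{t_g} e^{\lambda_k t}\, w'(t)\,dt = u_0 - e^{\lambda_k t_g}\mu u_0 + \lambda_k v_k = \lambda_k v_k,
$$
the boundary terms cancelling precisely because $e^{\lambda_k t_g}\mu=1$. To sidestep the fact that $u_0$ need not a priori lie in $D(\mathcal{B}_q^D)$, I would carry out the same manipulation in weak form: pair with any $\varphi\in D((\mathcal{B}_q^D)^*)$, move the operator onto $\varphi$, write $\langle w(t),(\mathcal{B}_q^D)^*\varphi\rangle = -\frac{d}{dt}\langle u_0, e^{-t(\mathcal{B}_q^D)^*}\varphi\rangle$, and integrate by parts in $t$; the identity $\langle u_0,e^{-t_g(\mathcal{B}_q^D)^*}\varphi\rangle=\mu\langle u_0,\varphi\rangle$ makes the boundary term vanish, yielding $\langle v_k,(\mathcal{B}_q^D)^*\varphi\rangle=\lambda_k\langle v_k,\varphi\rangle$. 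Since $\mathcal{B}_q^D$ is closed, this places $v_k$ in its domain and proves the eigenequation.

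It remains to show $v_k\neq 0$ for every $k$. Bringing $\tau_1$ under the integral and using $\tau_1 w(t)=e^{itg}w(t)$ gives the clean transport rule
$$
\tau_1 v_k = \int_0^{t_g} e^{(\lambda_k+ig)t} w(t)\,dt = v_{k+1},
$$
so that $\|v_k\|$ is constant in $k\in\mathbb{Z}$ by the unitarity of $\tau_1$; in particular, vanishing for one $k$ propagates to all. Suppose, for a contradiction, that all $v_k=0$. Then for every $\varphi\in L^2$ the scalar function $t\mapsto e^{\lambda_0 t}\langle w(t),\varphi\rangle$ has all its Fourier coefficients against the orthogonal basis $\{e^{igkt}\}_{k\in\mathbb{Z}}$ of $L^2([0,t_g])$ equal to zero (this is the point where $gt_g=2\pi$ is essential), hence vanishes almost everywhere; continuity of $w$ upgrades this to pointwise, and taking $t=0$, $\varphi=u_0$ gives $\|u_0\|^2=0$, a contradiction. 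Consequently every $v_k$ is nonzero, so $\lambda_k\in\sigma(\mathcal{B}_q^D)$ with explicit eigenfunction $v_k$ built from $u_0$.

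The step I expect to be most delicate is the rigorous justification of the integration by parts, since $u_0$ is only known to be an $L^2$ fixed point of $\mu^{-1}e^{-t_g\mathcal{B}_q^D}$ and the semigroup $(\exp(-t\mathcal{B}_q^D))_{t\geq 0}$ is not a priori analytic. The weak formulation sketched above is designed to avoid this obstacle: it requires only strong continuity of the semigroup and its adjoint, both guaranteed by Proposition \ref{Q1.2}, together with the already-known identity $e^{-t_g\mathcal{B}_q^D}u_0=\mu u_0$. Everything else — the commutation with $\tau_1$, the transport rule $\tau_1 v_k=v_{k+1}$, and the Fourier-completeness argument — is formally algebraic once $w$ is available.
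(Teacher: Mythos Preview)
Your construction coincides with the paper's: under the substitution $t=p/g$, your $v_0$ is (up to the factor $2\pi/g$) exactly the paper's $u=\frac{1}{2\pi}\int_0^{2\pi}u_p\,dp$ with $u_p=e^{-\frac{p}{g}(\mathcal B_q^D-\lambda_0)}u_0$. The gap is not where you expect it. An eigenfunction $u_0$ of $K_{q,p=0}$ satisfies the Floquet condition $\tau_1 u_0=u_0$ --- this is what ``$p=0$'' means, and you use it yourself when you call $u_0$ a ``$\tau_1$-invariant vector''. But a nonzero $\tau_1$-periodic function on the infinite cylinder $\widehat\Omega_0$ cannot belong to $L^2(\widehat\Omega_0)$. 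Hence your assertion $w\in C([0,t_g];L^2)$ is false: each $w(t)$ obeys $\tau_1 w(t)=e^{itg}w(t)$, so $|w(t)|$ is $x$-periodic and $w(t)\notin L^2(\widehat\Omega_0)$ for every $t$. The Bochner integral $v_k=\int_0^{t_g}e^{\lambda_k t}w(t)\,dt$ is therefore not defined as an $L^2$-valued integral, and the membership $v_k\in L^2(\widehat\Omega_0)$ is precisely the statement requiring proof. Your weak formulation of the eigenequation and your Fourier non-vanishing argument both rest on pairings $\langle w(t),\varphi\rangle$ with $\varphi\in L^2(\widehat\Omega_0)$, which are undefined for the same reason.

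This is not a technicality the adjoint trick can sidestep; it is the main analytic content. The paper handles it through Proposition~\ref{prop4.5}: the periodic $u_0$ lies in the weighted space $L^{2,-s}$ for $s>\tfrac12$, the semigroup extends boundedly to $L^{2,-s}$, and the integral is first defined there. One then invokes the inverse Floquet isometry~\eqref{isom}, together with the uniform bound $\sup_{p\in[0,2\pi]}\|u_p\|_{L^2(\Omega_0\cap(-\frac12,\frac12)^2)}<\infty$, to upgrade the result to $L^2(\Omega_0)$. Only after this is established does the eigenequation follow, from the observation that $e^{-t(\mathcal B_q^D-\lambda_0)}u=u$ for all $t\in[0,t_g]$, so $u$ is a fixed vector of the semigroup and hence lies in the domain of the generator with eigenvalue~$0$.
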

 \begin{proof}~\\
 
 {\bf Step 1: The heuristics behind the proof.}
 Heuristically, we will proceed in the following way. 
We start from an eigenfunction $u_0$ and associate with it
\begin{equation}\label{revformulab}
u_p =  \exp \biggl(- \frac{p}{g}( \mathcal B_q ^D-\lambda_0)\biggr) u_0\,.
\end{equation}
Defining $u$  by \eqref{revformula}, we obtain formally:
\begin{equation}\label{eq:revbb}
  \begin{array}{ll}
  ( \mathcal B_q^D -\lambda_0) u &= \bigl(\frac{1}{2\pi} \int_{0}^{2\pi} (\mathcal B_q^D-\lambda_0)  
 \exp (- \frac{p}{g}( \mathcal B_q ^D-\lambda_0))  \, dp \bigr) u_0\\
  & = \bigl(\frac{g}{2\pi } \int_{0}^{2\pi} ( - \frac{d}{dp}   \exp (- \frac{p}{g}( \mathcal B_q ^D-\lambda_0) ))  \, dp \bigr) u_0\\
  &= \frac{g}{2\pi }  \bigl(I -   \exp (- \frac{2\pi}{g}( \mathcal B_q ^D-\lambda_0) ) \bigr)  u_0\\
  &=0\,.
  \end{array}
\end{equation}
We finally observe that $u$ is not $0$, we have indeed formally
\begin{equation*}
  \Sigma_n \tau_1^n u =u_0\,.
\end{equation*}

{\bf Step 2:   Mathematical proof}~\\
In Step 1,  we have been very formal, omitting in
particular the questions relative to the domain for $u$.
We now give a rigorous proof.  Let us first state the following
proposition
\begin{proposition} \label{prop4.5}
For any $s\in \mathbb R$, 
\begin{equation*} 
 \exp (- t  \mathcal B_q^D) \,  L^{2,s}(\Omega_0) \subset  L^{2,s}(\Omega_0) \,,\quad \forall t \geq 0\,,
\end{equation*}
and there exists  $\omega_s$ such that 
\begin{equation}  \label{eq:ineq_auxil}
 ||  \exp (- t  \mathcal B_q ^D)  v ||_{L^{2,s}(\Omega_0)} \leq \exp (\omega_s t)  \, || v ||_{L^{2,s}(\Omega_0) } \,,\quad
\forall v \in   L^{2,s}(\Omega_0)\,,
\end{equation}
where $L^{2,s}$ is the weighted space
\begin{equation*}
L^{2,s}(\Omega_0) = \{ v \in L^2_{loc} (\Omega_0)\,,\, \langle x\rangle ^{s/2} v \in L^2(\Omega_0)\} \,,
\end{equation*}
with $\langle x\rangle = \sqrt{1+x^2}\,.$
\end{proposition}

We can now give a sense to \eqref{revformula}-\eqref{revformulab}. We
observe that $u_0$ belongs to $L^{2,-s}(\Omega_0)$ for $s>\frac 12$.
This implies by the proposition that $u$ is well defined in particular
in $L^{2,-s} (\Omega_0)$.\\
We now prove that $u$ is actually in $L^2$. For this we use the
Floquet theory. We observe that
\begin{equation*}
u_p :=  \exp \biggl(- \frac{p}{g}( ( \mathcal B_q ^D-\lambda_0))\biggr) u_0  \in L^{2,-s} \,,
\end{equation*}
and that $u_p$ satisfies the $p$-Floquet condition and the uniform
bound:
\begin{equation*}
|| u_p ||_{ L^2((0,1)^2\cap \Omega_0)} \leq \check C || u_0 ||_{ L^2((0,1)^2\cap \Omega_0)}\,,\quad \forall p\in [0,2\pi]\,,
\end{equation*}
 for some constant $\check C > 0$.
This implies, by the reverse Floquet decomposition formula (see around
\eqref{isom}), that $u$ is indeed in $L^2(\Omega_0)$.

We now compute $\exp (- t ( \mathcal B_q ^D-\lambda_0)) u$. It is
clear using the periodicity of $ \hat s \mapsto \exp \bigl(-\hat  s ( \mathcal B_q
^D-\lambda_0)\bigr) u_0$, that
\begin{equation*}
\exp (- t ( \mathcal B_q ^D-\lambda_0)) \, u = u ~~ \mbox{ in } L^{2,-s}\,,\quad \forall t \in [0, 2\pi/g]\,.
\end{equation*}
By semi-group theory, we immediately get that $u\in D(\mathcal B_q ^D)$
and that $ (\mathcal B_q ^D-\lambda_0)u=0\,$ as claimed in \eqref{eq:revbb}.  
This achieves the proof of Proposition \ref{prop4.6} modulo the proof of Proposition \ref{prop4.5}.
  
  \end{proof}
  
\paragraph{Proof of Proposition \ref{prop4.5}.}
To prove this proposition, we have to analyze the semi-group
\begin{equation*}
   t \mapsto \langle x\rangle^{s/2} \, \exp (- t \mathcal B_0^D)\,  \langle x\rangle^{-s/2}\,,
\end{equation*}
on $L^2(\Omega_0)$.\\
Its associated infinitesimal generator is the unbounded closed
operator
\begin{equation*}
\mathcal C_s^D:= \langle x\rangle^{s/2}\, \mathcal B_0^D  \, \langle x\rangle^{-s/2}
\end{equation*}
on $L^2(\Omega_0)\,.$ The associated differential operator is, with
$\alpha_s:= \langle x\rangle^{s/2}\,$, 
\begin{equation*}
   - \frac{d^2}{dx^2} -\frac{d^2}{dy^2} + ig x  -2 \alpha'_s\alpha_s^{-1}  \frac{d}{dx} - \alpha_s (\alpha_s^{-1})''\,,
\end{equation*}
which can be rewritten in the form
\begin{equation*}
   - \frac{d^2}{dx^2} -\frac{d^2}{dy^2}+ i g x  + a_s(x)  \frac{d}{dx} + b_s(x)\,,
\end{equation*}
where $a_s$ et $b_s$ are bounded.\\
The introduced perturbation does not change the form domain and using
Hille-Yosida theorem and the results of Appendix \ref{AppA}, we get
the existence of $\omega_s$ such that
\begin{equation*}  
   || \langle x\rangle^{s/2}\, \exp (- t \mathcal B_0^D) \,  \langle x\rangle^{-s/2} 
||_{\mathcal L(L^2 (\Omega_0))} \leq \exp (\omega_s t) \,,\quad \forall t>0\,.
\end{equation*}
This proves our proposition. \\

\begin{rem}
We guess but have not proved that $K_{q ,p=0}$ has only point
spectrum, i.e. that its spectrum only consists of eigenvalues. Note
that we could think of replacing in the above proof $u_0$ by a
sequence of approximate eigenfunctions $u_0^{(n)}$.\\
Formally \eqref{eq:revbb} gives
\begin{equation}\label{eq:revbba}
  ( \mathcal B_0^D -\lambda_0) u^{(n)} = \frac{g}{2\pi }  
\biggl(I - \exp \biggl(- \frac{2\pi}{g}( \mathcal B_q ^D-\lambda_0) \biggr)\biggr) \,  u_0^{(n)}\,.
\end{equation}
But on the right hand side we have only a sequence of periodic
functions which only tends to $0$ on a fundamental domain.\\ 
Hence, we do not have a Weyl sequence for $ \mathcal B_0^D$ relative
to $\lambda_0$. It remains some hope to proceed like in the proof of the
Schnoll theorem by introducing cut-off functions.
\end{rem}

\section{Quasi-modes and non-emptiness of the spectrum}
\label{sec:quasimodes}

We would like to analyze the behavior of the operator as $g\rightarrow
+\infty$ using the techniques of \cite{Henghm,AGH0ghm,AGHghm}.  We
take the semi-classical representation and look in the $2$-dimensional case at the asymptotic
limit as $h\rightarrow 0$ of the spectrum of the semi-classical
realization $\mathcal A _h^D$ of $-h^2 \Delta + i x $.\\

\subsection{Main results}  
The main result is:
\begin{thm}
\label{theorem5.1}   Under Assumptions \eqref{defd1}-\eqref{defd2} and assuming that $H_0$ is a strictly convex set in $\mathbb R^2$ with boundary of positive curvature, 
we have
\begin{equation}\label{limSpect1}  
  \lim\limits_{h\to0}\,\frac{1}{h^{2/3}}\inf \bigl\{\Re\, \sigma(\mathcal A _h^D) \bigr\} =  \frac{|a_1|}{2}\,,
\end{equation}  
where $a_1<0$ is the rightmost zero of the Airy function $\mathrm{Ai}$.\\

Moreover, for every $\varepsilon>0\,$, there exist $h_\varepsilon>0$
and $C_\varepsilon>0$ such that
\begin{equation}\label{estRes1}  
 \forall h\in(0,h_\varepsilon),~~~
 \sup_{
\begin{subarray}{c} 
\gamma\leq \frac{|a_1|}{2}  \\[0.5ex]
     \nu \in\mathbb{R}
\end{subarray}
}
 \|(\mathcal A _h^D-(\gamma -\varepsilon)h^{2/3}-i\nu)^{-1}\|\leq\frac{C_\varepsilon}{h^{2/3}}\,.
\end{equation}
In particular the spectrum of $\mathcal A_{h}^D$ is not empty.
\end{thm}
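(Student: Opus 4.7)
The plan is to establish the two displayed estimates separately, then deduce the non-emptiness of the spectrum from their combination. The resolvent estimate \eqref{estRes1} yields the lower bound in \eqref{limSpect1}: taking $\gamma=|a_1|/2$, the point $(|a_1|/2-\varepsilon)h^{2/3}+i\nu$ lies in $\rho(\mathcal A_h^D)$ for every $\nu\in\mathbb R$ and $h<h_\varepsilon$, hence $\inf\Re\sigma(\mathcal A_h^D)\ge(|a_1|/2-\varepsilon)h^{2/3}$. The matching upper bound and the non-emptiness of the spectrum then follow from the construction of a quasi-mode: a normalized $u_h\in D(\mathcal A_h^D)$ together with $\lambda_h\in\mathbb C$ satisfying $\Re\lambda_h\le(|a_1|/2+\varepsilon)h^{2/3}$ and $\|(\mathcal A_h^D-\lambda_h)u_h\|=o(h^{2/3})\|u_h\|$; combined with \eqref{estRes1} applied at $\lambda=\lambda_h$, this forces $\lambda_h$ to be at distance $o(h^{2/3})$ from $\sigma(\mathcal A_h^D)$, which is therefore non-empty and has $\inf\Re\sigma(\mathcal A_h^D)\le(|a_1|/2+2\varepsilon)h^{2/3}$.

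\textbf{Quasi-mode.} By the strict convexity of $H_0$, there is a point $M_0\in\partial H_0$ at which the outward unit normal equals $(1,0)$, and by positivity of the curvature $\kappa:=\kappa(M_0)>0$. In local tubular coordinates $(s,\rho)$ near $M_0$, with $s$ the arclength on $\partial H_0$ and $\rho$ the signed distance to $\partial H_0$ (positive in $\Omega$), one has $x(s,\rho)=x(M_0)+\rho-(\kappa/2)s^2+\mathcal{O}(s^3+\rho^2+s\rho)$, while the Laplacian agrees with $\partial_\rho^2+\partial_s^2$ up to $\mathcal{O}(\rho)$ first-order corrections. The anisotropic dilations $\rho=h^{2/3}\tilde\rho$, $s=h^{1/2}\tilde s$ bring $\mathcal A_h^D-ix(M_0)$ to leading order into the sum of the complex Airy operator $h^{2/3}(-\partial_{\tilde\rho}^2+i\tilde\rho)$ on the half-line (Dirichlet at $\tilde\rho=0$) and the complex harmonic oscillator $h(-\partial_{\tilde s}^2-(i\kappa/2)\tilde s^2)$ on $\mathbb R$. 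The first has ground eigenvalue $|a_1|\,e^{i\pi/3}$ of real part $|a_1|/2$ with an explicit Airy-type eigenfunction; the second has spectrum $\{(2k+1)\sqrt{\kappa/2}\,e^{-i\pi/4}\}_{k\ge 0}$ with a Hermite--Gaussian ground state. Multiplying the two ground states together with a smooth compactly supported cut-off delivers $u_h$ and $\lambda_h=|a_1|\,e^{i\pi/3}h^{2/3}+\mathcal O(h)$ with the required residue.

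\textbf{Resolvent bound.} For \eqref{estRes1} the strategy is an IMS-type localization. Introduce a smooth quadratic partition of unity $\chi_{\mathrm{in}}^2+\chi_{\mathrm{bd}}^2=1$ separating a strip of width $\asymp h^{2/3-\delta}$ around $\partial\Omega$ from the bulk. On the bulk support, the accretivity identity $\Re\langle\mathcal A_h^D u,u\rangle=h^2\|\nabla u\|^2$ combined with a distance-to-boundary Hardy inequality produces a lower bound of order $h^{2/3-2\delta}$. On each boundary strip, after straightening $\partial H_\gamma$ by the tubular coordinates above and Taylor-expanding $x$, one compares $\mathcal A_h^D$ with the tilted half-plane model $-h^2\Delta+i(x(M)+\rho\cos\theta(M)-s\sin\theta(M))$ under Dirichlet condition at $\rho=0$. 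The IMS remainder is an $\mathcal O(h^2/(h^{2/3-\delta})^2)=\mathcal O(h^{2/3+2\delta})\ll h^{2/3}$ term, so small enough to be absorbed. Uniformity in $\nu\in\mathbb R$ is obtained by means of the pseudo-translation \eqref{commrel1} with suitable real shifts together with the fact that imaginary translations of $\lambda$ do not affect the real-part analysis.

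\textbf{Main obstacle.} The hardest point of the lower bound is that at boundary points where $|\cos\theta|<1$ the tilted half-plane model has spectrum whose real part reaches $(|a_1|/2)|\cos\theta|^{2/3}h^{2/3}$, strictly smaller than $(|a_1|/2)h^{2/3}$, so a naive patching cannot reach the uniform constant $|a_1|/2$. One must show that at such points the non-vanishing tangential component of $\nabla x$ prevents any genuinely $L^2$-localized quasi-mode, and that the associated tangential Agmon weight contributes an extra $(|a_1|/2)(1-|\cos\theta|^{2/3})h^{2/3}$ in the local lower bound. The hypothesis of strictly positive curvature is used precisely here: it forces $\{|\cos\theta|=1\}\cap\partial H_0$ to be a finite set of isolated non-degenerate critical points, and this is what makes the tangential Agmon lift uniform along $\partial H_0$ and closes the proof.
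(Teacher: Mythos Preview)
Your overall architecture—quasi-mode for the upper bound, localization plus local model resolvent estimates for the lower bound—matches the paper's approach, which defers the technical work to \cite{AGH0ghm,AGHghm}. The quasi-mode paragraph is essentially correct and is exactly how the paper (via \cite[Section~7]{AGH0ghm}) produces the upper bound; the paper additionally routes it through the cylinder problem and Proposition~\ref{Prop3.2}, but since the quasi-mode is localized near a single boundary point this makes no real difference.

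There are, however, two genuine gaps in your resolvent-bound sketch.

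\textbf{The bulk argument.} For $u$ supported at distance $\gtrsim h^{2/3-\delta}$ from $\partial\Omega$ the Dirichlet condition is inactive, and a distance-to-boundary Hardy inequality gives nothing useful: $d(\cdot,\partial\Omega)^{-2}$ is \emph{small} on that support, not large, so $h^2\|\nabla u\|^2$ is not bounded below by anything of order $h^{2/3-2\delta}\|u\|^2$. What actually controls the bulk in the paper's scheme is the resolvent of the \emph{whole-plane} operator $\mathcal A_h=-h^2\Delta+ix$ on $L^2(\mathbb R^2)$, which has empty spectrum and satisfies $\|(\mathcal A_h-z)^{-1}\|\le C_\varepsilon h^{-2/3}$ uniformly for $\Re z\le (|a_1|/2-\varepsilon)h^{2/3}$. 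The interior pieces of the approximate resolvent \eqref{eq:9agh} are built from this, not from accretivity plus Hardy.

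\textbf{The ``main obstacle'' is misidentified.} Your tilted half-plane model $-h^2\Delta+i(\rho\cos\theta-s\sin\theta)$ on $\{\rho>0\}$ separates as the half-line complex Airy operator $-h^2\partial_\rho^2+i\rho\cos\theta$ tensored with the whole-line Davies operator $-h^2\partial_s^2-is\sin\theta$. When $\sin\theta\neq0$ the latter has \emph{empty} spectrum and bounded resolvent at every point of $\mathbb C$; consequently the tilted model does \emph{not} have spectrum with real part $(|a_1|/2)|\cos\theta|^{2/3}h^{2/3}$. Non-critical boundary points thus give \emph{better} local resolvent bounds than the target, not worse: there is no deficit to ``lift'' by a tangential Agmon weight, and the formula $(|a_1|/2)(1-|\cos\theta|^{2/3})h^{2/3}$ corresponds to no actual mechanism. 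The worst local model occurs precisely at the isolated points where $\sin\theta=0$, and there the bound is exactly $|a_1|/2$. The role of strict convexity is only to guarantee that these critical points of $x|_{\partial H_0}$ are isolated and non-degenerate, so that the covering by balls of radius $h^\varrho$ with $1/3<\varrho<2/3$ (see \eqref{eq:25agh}) separates them and the local resolvent estimates patch uniformly; it is not used to manufacture an additive correction at generic boundary points.
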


If one now looks at the reduced problem on the cylinder $\mathbb R
\times (-\frac 12,\frac 12)$, the main result would be the same for
$\mathcal A _h^{D,q }$ with the difference that we know that the
spectrum is discrete.

\begin{thm} \label{Theorem5.2}
 Under the same assumptions as in Theorem \ref{theorem5.1},  there exists $\lambda(h,q )$ such that 
\begin{equation*}
\lim_{h\rightarrow 0} ( \lambda(h,q ) -i r) h^{-\frac 23}= |a_1| e^{-i\frac \pi 3} 
\end{equation*}
and such that
\begin{equation*}
\lambda(h,q ) + i k \in  \sigma(\mathcal A _h^{D,q } ), \quad \forall k \in \mathbb Z\,.
\end{equation*}
\end{thm}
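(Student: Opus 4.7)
The strategy is to construct a semiclassical quasi-mode concentrated near a single boundary point and to propagate the resulting eigenvalue under the translation $\tau_1$ using the commutation relation \eqref{commrel1}.

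The cylinder operator $\mathcal{A}_h^{D,q}$ has compact resolvent (the semiclassical analogue of Proposition \ref{Q1.2}(2)), so its spectrum is purely discrete and a quasi-mode argument suffices to locate a genuine eigenvalue. The plan is to produce a normalized family $\psi_h\in D(\mathcal{A}_h^{D,q})$ and a complex number $\mu(h,q)$ such that
\begin{equation*}
\mu(h,q) = ir + h^{2/3}|a_1|e^{-i\pi/3} + o(h^{2/3}), \qquad \bigl\|(\mathcal{A}_h^{D,q}-\mu(h,q))\psi_h\bigr\| = o(h^{2/3}).
\end{equation*}
Combined with the resolvent estimate \eqref{estRes1} (whose proof transfers essentially verbatim to the cylindrical setting), this forces a true eigenvalue $\lambda(h,q)$ of $\mathcal{A}_h^{D,q}$ to lie within $o(h^{2/3})$ of $\mu(h,q)$, which gives the first assertion of the theorem.

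The construction of $\psi_h$ follows the WKB-Airy scheme of \cite{Henghm,AGH0ghm,AGHghm}. Let $(r,y_*)\in\partial H_{(0,0)}$ be the point where $x$ attains its minimum on the obstacle boundary; strict convexity and positive curvature of $H_0$ make this point unique and ensure that the tangent there is vertical. In tubular coordinates $(s,t)$ flattening $\partial H_{(0,0)}$ near $(r,y_*)$ ($t$ normal, $s$ tangential), $\mathcal{A}_h^{D,q}$ takes at leading order the form
\begin{equation*}
-h^2\partial_t^2 - h^2\partial_s^2 + ir + \tfrac{1}{2}i\kappa_* s^2 + (\text{lower order}),
\end{equation*}
where $\kappa_*>0$ is the curvature at the base point. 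A tensor-product ansatz combining a rescaled complex-Airy profile $h^{-1/6}\mathrm{Ai}\bigl(h^{-2/3}t+a_1\bigr)$ in the normal direction (matching the Dirichlet condition at $t=0$) with a complex Gaussian in $s$ then produces the announced approximate eigenvalue; the factor $e^{-i\pi/3}$ comes from the rotation of the complex Airy problem $-\partial_t^2+it$ on the half-line, while the $q$-dependence and $\kappa_*$ only enter through subleading terms.

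The second assertion follows immediately from the commutation relation \eqref{commrel1}: if $u_\lambda$ is an eigenfunction of $\mathcal{A}_h^{D,q}$ for the eigenvalue $\lambda(h,q)$, then
\begin{equation*}
\mathcal{A}_h^{D,q}\bigl(\tau_1^k u_\lambda\bigr) = \bigl(\lambda(h,q)+ik\bigr)\,\tau_1^k u_\lambda, \qquad k\in\mathbb{Z},
\end{equation*}
and $\tau_1^k u_\lambda$ stays in $D(\mathcal{A}_h^{D,q})$ since $\widehat{\Omega}_0$ is invariant under $\tau_1$. The main obstacle lies in the quasi-mode step: one must verify that Agmon-type exponential decay, together with the $s$-confinement produced by positive curvature, truly localize $\psi_h$ near $(r,y_*)$, so that the Dirichlet conditions on the neighbouring obstacles $H_{(n,0)}$ ($n\neq 0$) and the Floquet/periodic condition in $y$ only contribute exponentially small errors, uniformly in $q\in\mathbb{R}$.
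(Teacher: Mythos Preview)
Your plan is correct and coincides with the paper's own argument: the paper simply reduces to $q=0$ (conjugation by $e^{iyq}$), invokes the quasi-mode of \cite[Section~7]{AGH0ghm} localized near the boundary point $(r,0)$, and pairs it with the resolvent estimate \eqref{estRes1} transferred to the cylinder; the $+ik$ shift is exactly the content of \eqref{eq:4.2}. One slip to fix: your displayed leading-order model omits the linear normal term (it should read $-h^2\partial_t^2 - h^2\partial_s^2 + ir \pm it + \tfrac{1}{2}i\kappa_* s^2 + \cdots$), which is precisely the term generating the complex Airy profile you use in the next sentence.
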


This result is essentially a reformulation of the results stated by
Almog in \cite{Almghm} and in the case of the exterior problem in
\cite{AGHghm}. 

\begin{rem}\label{remgen}
According to \cite{Henghm, AGH0ghm,AGHghm}, similar results can be formulated for
Neumann and Robin boundary conditions, by adapting the proofs from \cite{AGHghm}.
\end{rem}

\subsection{Proofs.}

\subsubsection{Lower bound}
The proof is identical to the exterior case considered in
\cite{AGHghm} (Subsection 2.2). The fact that there is an infinite number of holes
instead one hole does not change the proof.  The assumptions that $V(x,y)=x$ and the strict convexity of $H_0$ permits to verify all the assumptions appearing in this subsection.  We recall the main lines
with the simplifications that our potential $V$ is simply
$V(x,y)=x$.\\

By lower bound, we mean
\begin{equation}
  \label{eq:5agh}
 \varliminf\limits_{h\to0}\frac{1}{h^{2/3}}\inf \bigl\{\Re\, \sigma(\mathcal
 A _h^D) \bigr\} \geq  \frac{|a_1|}{2} \,.
\end{equation}

We keep the notation of \cite[Section~6]{AGH0ghm} and \cite{AGHghm}.
For some $1/3<\varrho<2/3$ and for every $h\in(0,h_0]$, we choose two
sets of indices $\mathcal{J}_{i}(h)\,$, $ \mathcal{J}_{\partial}(h)
\,$, and a set of points
\begin{subequations}   \label{eq:25agh}
\begin{equation}
\big\{a_j(h)\in \Omega : j\in \mathcal J_i(h)\big\}\cup\big\{b_k(h)\in\partial {\Omega} : k\in \mathcal J_\partial (h)\big\}\,,
\end{equation}
such that $B(a_j(h),h^\varrho)\subset\Omega$\,,
\begin{equation}
    \bar\Omega \subset\bigcup_{j\in \mathcal{J}_{i}(h)}B(a_j(h),h^{\varrho})~\cup\bigcup_{k\in \mathcal{J}_{\partial }(h)}B(b_k(h),h^{\varrho})\,,
\end{equation}
and such that the closed balls $\bar B(a_j(h),h^{\varrho}/2)\,$, $\bar
B(b_k(h),h^{\varrho}/2)$ are all disjoint. \\

Now we construct in $ \mathbb R^2$ two families of functions
\begin{equation}
 (\chi_{j,h})_{j\in \mathcal J_i(h)} \mbox{ and } (\zeta_{j,h})_{j\in \mathcal J_\partial (h)}\,,
\end{equation}
such that, for every $x\in\bar\Omega\,$,
\begin{equation}
\sum_{j\in \mathcal J_i(h)}\chi_{j,h}(x)^2+\sum_{k\in \mathcal J_\partial (h)}\zeta_{k,h}(x)^2=1\,,
\end{equation}
\end{subequations}
and such that 
\begin{itemize}
\item $\mbox{ Supp } \chi_{j,h}\subset B(a_j(h),h^{\varrho})$ for
$j\in \mathcal J_i(h)$,
\item
 $\mbox{ Supp }\zeta_{j,h}\subset
B(b_j(h),h^{\varrho})$ for $j\in \mathcal{J}_\partial(h)\,$,
\item
$\chi_{j,h}\equiv 1$ (respectively $\zeta_{j,h}\equiv1$) on $\bar
B(a_j(h),h^{\varrho}/2)$ (respectively $\bar
B(b_j(h),h^{\varrho}/2)$)\,.
\end{itemize}

We now define the approximate resolvent as in \cite{AGHghm}
\begin{equation}
\label{eq:9agh}
  \mathcal R_h(z) =\sum_{j\in \mathcal \mathcal{J}_i(h)}\chi_{j,h}(\mathcal A_{h}-z)^{-1}\chi_{j,h} 
+  \sum_{j\in \mathcal \mathcal{J}_\partial (h)}q _{j,h}  R_{j,h}(z) q _{j,h}\,,
\end{equation}
where $R_{j,h}(z)$ is given by \cite[Eq. (6.14)]{AGHghm}, and
$q _{j,h}={\mathbf 1}_\Omega\zeta_{j,h}$.\\
We write 
\begin{equation}
\label{eq:7agh}
\mathcal R_h(z) \circ (\mathcal A_h^D -z) = I + \mathcal E (h,z)\,,
\end{equation}
where
\begin{equation}
\label{eq:8agh}
\begin{array}{ll}
  \mathcal E(h,z)  & = -
  h^2[\Delta,\chi_{j,h}] (\mathcal A_{h} -z)^{-1}\chi_{j,h} \\ & \quad  +\sum_{j\in
    \mathcal \mathcal{J}_\partial (h)}(\mathcal A_h-z)q _{j,h}  R_{j,h}(z)q _{j,h}\,. 
    \end{array}
\end{equation}
The control can be achieved as in \cite{AGHghm}. We may thus conclude
that for any $\epsilon>0$ there exists $C_\epsilon>0$ such that for
sufficiently small $h$
\begin{displaymath}
  \sup_{\Re z\leq h^{2/3}(\frac{|a_1|}{2}-\epsilon)}\| \mathcal E(h,z) \|\leq C\, \bigl(h^{2-2\rho -\frac 23} + h^{2\rho -\frac 23}\bigr)\,. 
\end{displaymath}
Since for sufficiently small $h$, $(I+\mathcal E(h,z))$ becomes
invertible, we can now use \eqref{eq:7agh} to conclude that for any
$\epsilon>0$ there exist $C_\epsilon>0$ and $h_\epsilon >0$ such that
for any $h\in (0,h_\epsilon]$
\begin{displaymath}
   \sup_{\Re z\leq h^{2/3}(\frac{|a_1|}{2} -\epsilon)}\|(\mathcal A_h^D-z)^{-1}\|\leq
   \frac{C_\epsilon}{h^{2/3}} \,.
\end{displaymath}
This completes the proof of \eqref{eq:5agh}.

\subsubsection{The proof of upper bounds}

According to Proposition \ref{Prop3.2}, it is enough to prove Theorem
\ref{Theorem5.2}. Without loss of generality it is enough to present the
proof for $q =0$. Up to multiplication by $e^{i y q }$, the quasimode
is the same and we can take it localized near the point $(r,0)$ (see
Formula (7.2) in \cite{AGH0ghm}).  To prove that
\begin{equation*}
  \varlimsup\limits_{h\to0}\frac{1}{h^{2/3}}\inf \bigl\{\Re\, \sigma(\mathcal
 A _h^{D,q }) \bigr\} \leq  \frac{|a_1|}{2}\,.
\end{equation*}
we use the same procedure presented in \cite[Section 7]{AGH0ghm}.

\section{Conclusion}
\label{sec:conclusion}

The goal of the paper was to analyze mathematically the approach of
\cite{MMGghm} for the spectral analysis of the Bloch-Torrey equation in
periodic perforated domains in $\mathbb R^2$. 

We have proved that, at least in an asymptotic regime ($g\rightarrow
+\infty$), the presence of holes creates indeed spectrum and that it
was natural to detect these eigenvalues by analyzing an associate
bounded operator on $L^2(\mathbb T^2)$ through an extension of the
Floquet theory.\\

 We have focused on the proofs for the Dirichlet realization
$\mathcal B^D$ in planar domains. The precise definition of the
realizations for the other boundary conditions is treated in detail in
the $(1D)$-case in \cite{GHHghm} (see also \cite{GHghm} in (2D)) referring
to the same  variant of the Lax-Milgram theorem recalled in the appendix. The
semi-classical analysis of Section~\ref{sec:quasimodes}  relies
on Refs. \cite{Henghm,AHghm,AGH0ghm,AGHghm,GHghm}.  Note that Ref.
\cite{Henghm} is written for problems without the restriction to the
dimension. Nevertheless, as mentioned in the remarks of Subsection
\ref{ss3.3} some of the results and conjectures are specific of the
dimension~2.

\appendix

\section{Generalized Lax-Milgram Theorem}\label{AppA}

In this appendix we recall some results established in Almog-Helffer
\cite{AHghm}. For other applications of these results we also mention
\cite{KRRS} where magnetic Laplacians are considered.

 We consider two Hilbert spaces $\mathcal{V}$ and  $\mathcal{H}$
such that $\mathcal{V}\subset \mathcal{H}$, and that for some $C>0$
and any $u\in \mathcal{V}$, we have
\begin{equation}
  \label{eq:7} \|u\|_\mathcal{H} \leq C \|u\|_{\mathcal{V}}\,.
\end{equation}
Suppose further that
\begin{equation}\label{densi}
\mathcal{V} \mbox{ is dense in } \mathcal{H}\,.
\end{equation}
Consider a continuous sesquilinear
form $a$ defined on $\mathcal{V}\times \mathcal{V}$:
\begin{displaymath}
  (u,v)\mapsto a(u,v)\,.
\end{displaymath}
Let
\begin{equation}\label{lm3}
D(S) =\{u\in \mathcal{V}\;|\; v\mapsto a(u,v) \mbox{ is continuous on } \mathcal{V} \mbox{ in the
  norm of } \mathcal{H}\}\,,
\end{equation}
and  define the operator $S:D(S)\to\mathcal{H}$ by
\begin{equation}
\label{lm4}
a(u,v) = \langle Su\,,\,v \rangle_\mathcal{H}\,,\quad \forall u \in D(S) \mbox{ and } \forall v\in \mathcal{V}\,.
\end{equation}
We have the following theorem:
\begin{thm}
\label{LaxMilgramv2}
Let $a$ be a continuous sesquilinear form satisfying for some $\Phi_1,
\Phi_2 \in \mathcal L (\mathcal{V})$
\begin{equation}
\label{lm5n}
|a(u,u)| + |a(u, \Phi_1(u))| \geq \alpha\, \|u\|_\mathcal{V}^2\,,\quad \forall u\in \mathcal{V}\,.
\end{equation}
\begin{equation}
\label{lm5ne}
|a(u,u)| + |a( \Phi_2(u),u)| \geq \alpha\, \|u\|_\mathcal{V}^2\,,\quad \forall u\in \mathcal{V}\,.
\end{equation}
Assume further that $\Phi_1$ and $\Phi_2$ extend into continuous
linear maps in $\mathcal L (\mathcal{H})\,$ and let $S$ be defined by
\eqref{lm3}-\eqref{lm4}. Then
\begin{enumerate}
\item $S$ is bijective from $D(S)$ onto $\mathcal{H}$ and $S^{-1}\in \mathcal{L}(\mathcal{H})\,$;
\item $D(S)$ is dense in both $\mathcal{V}$ and $\mathcal{H}$;
\item $S$ is closed;
\item Let $b$ denote the conjugate sesquilinear form of $a$, i.e.
  \begin{equation}
\label{eq:53}
(u,v) \mapsto b(u,v):=\overline{a(v,u)}\,.
  \end{equation} 
Let $S_1$ denote the closed  linear operator associated with $b$ by
the same construction . Then  
\begin{equation}\label{infoadj}
  S^*=S_1 \mbox{ and }S_1^* = S\,.
  \end{equation}
\end{enumerate}
\end{thm}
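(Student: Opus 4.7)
The natural first step is to reduce the variational problem to an invertibility statement on $\mathcal{V}$ via Riesz representation. Continuity of $a$ allows one to define $T\in\mathcal{L}(\mathcal{V})$ by the identity $a(u,v)=(Tu,v)_{\mathcal{V}}$; the conjugate form $b$ is then represented by the $\mathcal{V}$-adjoint $T^*$. I will also use the Riesz embedding $J:\mathcal{H}\to\mathcal{V}$ defined by $(Jf,v)_{\mathcal{V}}=(f,v)_{\mathcal{H}}$ for $v\in\mathcal{V}$; this is well-defined by \eqref{eq:7}, and a short duality argument shows that $J(\mathcal{H})$ is dense in $\mathcal{V}$ (a vector $w\in\mathcal{V}$ orthogonal to $J(\mathcal{H})$ would satisfy $(f,w)_{\mathcal{H}}=0$ for every $f\in\mathcal{H}$, hence $w=0$ using \eqref{densi}).

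The crucial step is to show that $T$ is an isomorphism of $\mathcal{V}$. Using \eqref{lm5n} and Cauchy--Schwarz,
\[
\alpha\|u\|_{\mathcal{V}}^2\leq |(Tu,u)_{\mathcal{V}}|+|(Tu,\Phi_1 u)_{\mathcal{V}}|\leq\bigl(1+\|\Phi_1\|_{\mathcal{L}(\mathcal{V})}\bigr)\|Tu\|_{\mathcal{V}}\,\|u\|_{\mathcal{V}},
\]
so $\|Tu\|_{\mathcal{V}}\geq c_1\|u\|_{\mathcal{V}}$, giving injectivity and closed range of $T$. Applied to the conjugate form $b$, condition \eqref{lm5ne} produces the analogous bound for $T^*$, hence $\ker T^*=\{0\}$. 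Combined with the closed-range decomposition $\mathcal{V}=\overline{\mathrm{range}\,T}\oplus\ker T^*$, this yields that $T$ is bijective with $T^{-1}\in\mathcal{L}(\mathcal{V})$.

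From there the passage to $S$ is essentially an identification: $u\in D(S)$ iff $v\mapsto(Tu,v)_{\mathcal{V}}$ is $\mathcal{H}$-continuous on $\mathcal{V}$, which translates exactly to $Tu\in J(\mathcal{H})$, and in that case $Su=J^{-1}(Tu)$. For any $f\in\mathcal{H}$, setting $u:=T^{-1}Jf$ yields the unique solution of $Su=f$; the chain of estimates $\|u\|_{\mathcal{V}}\leq c_1^{-1}\|J\|_{\mathcal{L}(\mathcal{H},\mathcal{V})}\|f\|_{\mathcal{H}}$ then shows $S^{-1}\in\mathcal{L}(\mathcal{H})$. Closedness of $S$ is automatic, and density of $D(S)=T^{-1}(J(\mathcal{H}))$ in $\mathcal{V}$ (and hence in $\mathcal{H}$) follows because $T^{-1}$ is a homeomorphism and $J(\mathcal{H})$ was shown to be dense in $\mathcal{V}$.

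For the adjoint identification, starting from $a(u,v)=(Su,v)_{\mathcal{H}}$ valid for $u\in D(S)$ and $v\in\mathcal{V}$, complex conjugation gives $b(v,u)=(v,Su)_{\mathcal{H}}$. Hence $v\in D(S^*)$ iff $u\mapsto b(v,u)$ is $\mathcal{H}$-continuous on $D(S)$; by the just-proved density of $D(S)$ in $\mathcal{V}$, this is the same as $v\in D(S_1)$, with $S^*v=S_1v$. Exchanging the roles of $a$ and $b$ yields $S_1^*=S$. The main obstacle is clearly the second step: one needs both \eqref{lm5n} \emph{and} \eqref{lm5ne}, the first ensuring injectivity with closed range, the second killing the cokernel $\ker T^*$; neither condition alone suffices, which is precisely what distinguishes this extension from the classical Lax--Milgram theorem.
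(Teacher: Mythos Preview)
The paper does not prove this theorem: Appendix~A merely \emph{recalls} it from Almog--Helffer \cite{AHghm}, so there is no proof in the paper to compare against. Your argument must therefore be judged on its own, and it is essentially the standard Riesz-representation route, carried out correctly through item~(3).

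There is one genuine gap in your treatment of item~(4). When you write ``$v\in D(S^*)$ iff $u\mapsto b(v,u)$ is $\mathcal H$-continuous on $D(S)$'', you are implicitly assuming $v\in\mathcal V$, since otherwise $b(v,u)$ is not even defined. But the definition of $D(S^*)$ only places $v$ in $\mathcal H$, and density of $D(S)$ in $\mathcal V$ does not by itself force $D(S^*)\subset\mathcal V$. What your argument \emph{does} establish cleanly is the inclusion $S_1\subset S^*$ (take $v\in D(S_1)\subset\mathcal V$ and read your identity $b(v,u)=(v,Su)_{\mathcal H}$ backwards). To close the loop, observe that the hypotheses \eqref{lm5n}--\eqref{lm5ne} for $a$ are exactly \eqref{lm5ne}--\eqref{lm5n} for $b$, so your earlier steps apply verbatim to $S_1$: it is bijective from $D(S_1)$ onto $\mathcal H$. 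Since $S$ is surjective, $\ker S^*=(\mathrm{range}\,S)^\perp=\{0\}$, so $S^*$ is injective; combined with $S_1\subset S^*$ and surjectivity of $S_1$, this forces $S_1=S^*$. The identity $S_1^*=S$ then follows by symmetry (or from $S^{**}=S$ since $S$ is closed and densely defined).

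A side remark: your proof never uses the hypothesis that $\Phi_1,\Phi_2$ extend to $\mathcal L(\mathcal H)$; only $\Phi_i\in\mathcal L(\mathcal V)$ enters, via the Cauchy--Schwarz step. This is not a defect of your argument---the extension hypothesis appears to be superfluous for the conclusions stated here---but it is worth flagging, since the original reference may exploit it for further properties not reproduced in this appendix.
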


\begin{rem}\label{remA3} 
We recall that the Hille-Yosida theorem (see Theorem 13.22 in
\cite{Hghm}) can be applied to the above defined operator $S$, if we
have
\begin{equation*}
\Re  \langle  S u, u\rangle \geq - C ||u||^2_{\mathcal H}\,,\quad \forall u \in \mathcal V \,.
\end{equation*}
\end{rem}

\section{Spectrum and Weyl's sequences}

In this section, we adapt to the non-self-adjoint situation a
characterization of the spectrum for a self-adjoint operator through
Weyl's sequences.
\begin{lem}\label{lemW}
Let $A$ be a closed operator in a Hilbert space $H$ which has the
property that, for any $\lambda \in \mathbb C$
\begin{equation}\label{hypsp}
  (A-\lambda) \mbox{ injective  implies } (A^*-\bar \lambda ) \mbox{ injective}.
\end{equation}
Under this assumption,  the two assertions are equivalent:
\begin{enumerate}
\item 
$\lambda \in \sigma (A)$\,.
\item 
There exists a sequence $u_n$ in $D(A)$ such that $||u_n||=1$ and
$(A-\lambda) u_n \rightarrow 0$.
\end{enumerate}
\end{lem}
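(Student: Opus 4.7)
The lemma is the non-self-adjoint analogue of the Weyl criterion, and I would prove the equivalence by the two standard implications, only one of which is delicate.

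The direction (2) $\Rightarrow$ (1) is completely routine and does not use hypothesis \eqref{hypsp}: if $\lambda$ were in the resolvent set then $(A-\lambda)^{-1}$ would be bounded on $H$, and applying it to $(A-\lambda)u_n \to 0$ would force $1 = \|u_n\| \to 0$, which is absurd. This takes one line.

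For the hard direction (1) $\Rightarrow$ (2), I would split according to whether $A-\lambda$ is injective. If $\ker(A-\lambda) \neq \{0\}$, any normalized element of the kernel gives a (constant) Weyl sequence. The real content is the case where $A-\lambda$ is injective. Here the hypothesis \eqref{hypsp} enters: it gives that $A^*-\bar\lambda$ is injective, so
\begin{equation*}
\overline{\operatorname{Range}(A-\lambda)} = \ker(A^*-\bar\lambda)^\perp = H,
\end{equation*}
i.e.\ $A-\lambda$ has dense range. Next I would argue that $A-\lambda$ is not surjective: otherwise, since $A$ is closed, $(A-\lambda)^{-1}$ is everywhere defined and closed, hence bounded by the closed graph theorem, contradicting $\lambda \in \sigma(A)$.

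Thus $(A-\lambda)^{-1}$ is defined on a proper dense subspace $\mathcal R := \operatorname{Range}(A-\lambda)$. The main step is to show it is unbounded there. Suppose for contradiction that it were bounded on $\mathcal R$; then it would extend to a bounded operator $R$ on $H$. For any $v \in H$, pick $v_n \in \mathcal R$ with $v_n \to v$, and set $u_n = (A-\lambda)^{-1} v_n \to Rv$ while $(A-\lambda)u_n = v_n \to v$; closedness of $A-\lambda$ gives $Rv \in D(A)$ and $(A-\lambda)Rv = v$, so $A-\lambda$ is surjective, a contradiction. Hence there exist $v_n \in \mathcal R$ with $\|v_n\|=1$ and $\|(A-\lambda)^{-1} v_n\| \to \infty$, and setting $u_n := (A-\lambda)^{-1} v_n / \|(A-\lambda)^{-1} v_n\|$ produces the required Weyl sequence with $\|u_n\|=1$ and $(A-\lambda)u_n \to 0$.

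The only nontrivial obstacle is the density-of-range step; this is exactly where \eqref{hypsp} is indispensable, since without it one could have $\lambda \in \sigma(A)$ with $\operatorname{Range}(A-\lambda)$ closed of infinite codimension and no approximate-eigenvector sequence (the so-called residual spectrum, which is incompatible with \eqref{hypsp}). The rest is a direct application of the closed graph theorem together with the closedness of $A$.
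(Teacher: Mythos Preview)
Your proof is correct. The paper argues the nontrivial direction (1) $\Rightarrow$ (2) by the contrapositive: if no Weyl sequence exists, then by negation there is a uniform lower bound $\|(A-\lambda)u\|\geq c\|u\|$ for all $u\in D(A)$, which (using that $A$ is closed) gives in one stroke that $A-\lambda$ is injective with \emph{closed} range; hypothesis \eqref{hypsp} together with $\operatorname{Range}(A-\lambda)=\ker(A^*-\bar\lambda)^\perp$ then forces surjectivity, so $\lambda\in\rho(A)$.

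Your direct approach reaches the same conclusion but unpacks it into several separate steps (density of the range, non-surjectivity, unboundedness of the partial inverse via an extension argument and the closed graph theorem, and finally the explicit construction of $u_n$). The paper's contrapositive is more economical because the negation of (2) \emph{is} the statement that $A-\lambda$ is bounded below, which bundles injectivity and closedness of the range; your version has the merit of producing the Weyl sequence explicitly and of isolating, in your closing remark on the residual spectrum, exactly where and why \eqref{hypsp} is indispensable.
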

\begin{proof}
The only difficulty is to prove that (1) implies (2). If there is no
sequence $(u_n) \in D(A)$ such that (2) is satisfied, then there
exists $c>0$ such that
\begin{equation*}
|| (A-\lambda) u|| \geq c ||u||\,,\quad \forall u \in D(A)\,.
\end{equation*}
From this we deduce that $(A-\lambda)$ is injective with closed
range. Now we have
\begin{equation*}
{\rm Range}  (A-\lambda) =( {\rm Ker} (A^* - \bar \lambda))^\perp\,.
\end{equation*}
 By \eqref{hypsp}  $(A^*-\bar \lambda)$ is injective. Hence we get the
surjectivity.
\end{proof}
\begin{remarks}\label{remW}~
\begin{itemize}
\item 
The property is evidently satisfied in the self-adjoint case because
the spectrum is real.
\item
The property \eqref{hypsp} is satisfied if $H$ is a complex
Hilbert space and if there is an antilinear involution $\Gamma$  such that $\Gamma D(A) \subset D(A)$ and 
\begin{equation}
 \Gamma A = A^* \Gamma\,.
\end{equation}
In particular it holds for our Bloch-Torrey operators by taking as
$\Gamma$ the complex conjugation.

\item 
The lemma is not true without property \eqref{hypsp}. As
suggested by the referee, one can indeed consider $H=L^2(0,1)$, $A=-i
\frac{d}{dx}$, $D(A)=H_0^1(0,1)$. We have $\sigma(A)=\mathbb C$ and
there is no a Weyl sequence for any $\lambda \in \mathbb C$. On the
other hand $(A-\lambda)$ is injective for any $\lambda$ and $(A^*-\bar
\lambda)$ is non-injective for any $\lambda$.
\end{itemize}
\end{remarks}

\begin{acknowledgments}
D.~S.~G. acknowledges a partial financial support from the Alexander
von Humboldt Foundation through a Bessel Research Award.  The
authors thank the anonymous referee for his remarks and suggestions,
in particular when observing that the statement of Lemma
\ref{lemW} that we initially gave in the submitted version, was
wrong.
\end{acknowledgments}

\small

\end{document}